\newtheorem{lemma}{Lemma}
\newtheorem{proposition}{Proposition}
\newtheorem{remark}{Remark}
\newenvironment{proof}{\noindent\textbf{Proof:} }{\indent\hfill\rule{0.5em}{0.5em}}
\begin{document}

\title{Electroweak interaction without projection operators using
complexified octonions}
\author{John Fredsted\thanks{%
physics@johnfredsted.dk} \\
%EndAName
Soeskraenten 22, Stavtrup, DK-8260 Viby J., Denmark}
\maketitle

\begin{abstract}
Using complexified octonions, a formalism seemingly capable of describing
the coupling of spinors to the electroweak force with\textit{out} projection
operators is presented.
\end{abstract}

\section{Introduction}

The weak interaction violates parity maximally, coupling exclusively to the
left-handed components of Dirac four-spinors. In the standard model of
particle physics, this fact of Nature is mathematically modelled by \textit{%
putting in by hand} projection operators, $\frac{1}{2}\left( 1\pm \gamma
_{5}\right) $, leaving entirely unaddressed the question as to whether there
might exist some underlying physical or mathematical mechanism that might
explain this fact.

This paper will present a formalism seemingly capable of describing the
coupling of spinors to the electroweak force with\textit{out} projection
operators. The formalism will \textit{not} be given in the domain of
(associative) matrix calculus, a class to which the Dirac formalism and,
indeed, the entire standard model of particle physics belong; rather, it
will be given in the domain of (non-associative) complexified octonions.

\section{Notation}

\begin{itemize}
\item The sets of real numbers, complex numbers, quaternions, and octonions,
respectively, are denoted $\mathbb{R}$, $\mathbb{C}$, $\mathbb{H}$, and $%
\mathbb{O}$, as customarily.

\item The set of complexified octonions is denoted $\mathbb{C}\otimes 
\mathbb{O}$. Complex numbers and octonions are assumed to commute, i.e., $%
\mathbb{C}\otimes \mathbb{O}\equiv \mathbb{O}\otimes \mathbb{C}$.

\item The real- and imaginary parts of $\mathbb{C}$ are denoted $\func{Re}%
\left( \mathbb{C}\right) \cong \mathbb{R}$ and $\func{Im}\left( \mathbb{C}%
\right) \cong \mathrm{i}\mathbb{R}$, respectively, as customarily.

\item For any set $\mathbb{C}\otimes \mathbb{S}\subseteq \mathbb{C}\otimes 
\mathbb{O}$, usual complex conjugation acts as $\mathbb{C}\otimes \mathbb{%
S\rightarrow C}^{\ast }\otimes \mathbb{S}$, and usual octonionic conjugation 
\cite{Okubo,Springer and Veldkamp} acts as $\mathbb{C}\otimes \mathbb{%
S\rightarrow C}\otimes \overline{\mathbb{S}}$. In conjunction, these two
conjugations may be used to write $\mathbb{C}\otimes \mathbb{S}=\left( 
\mathbb{C}\otimes \mathbb{S}\right) ^{+}\cup \left( \mathbb{C}\otimes 
\mathbb{S}\right) ^{-}$, where%
\begin{equation*}
\left( \mathbb{C}\otimes \mathbb{S}\right) ^{\pm }\equiv \left\{ x\in 
\mathbb{C}\otimes \mathbb{S}\left\vert \overline{x}^{\ast }=\pm x\right.
\right\} .
\end{equation*}

\item For any set $\mathbb{C}\otimes \mathbb{S}\subseteq \mathbb{C}\otimes 
\mathbb{O}$, the scalar- and vector parts of $\mathbb{C}\otimes \mathbb{S}$,
denoted $\mathrm{Scal}\left( \mathbb{C}\otimes \mathbb{S}\right) $ and $%
\mathrm{Vec}\left( \mathbb{C}\otimes \mathbb{S}\right) $, respectively, are
defined as%
\begin{eqnarray*}
\mathrm{Scal}\left( \mathbb{C}\otimes \mathbb{S}\right) &\equiv &\left\{
x\in \mathbb{C}\otimes \mathbb{S}\left\vert \overline{x}=+x\right. \right\}
\equiv \mathbb{C}\otimes \mathrm{Scal}\left( \mathbb{S}\right) , \\
\mathrm{Vec}\left( \mathbb{C}\otimes \mathbb{S}\right) &\equiv &\left\{ x\in 
\mathbb{C}\otimes \mathbb{S}\left\vert \overline{x}=-x\right. \right\}
\equiv \mathbb{C}\otimes \mathrm{Vec}\left( \mathbb{S}\right) ,
\end{eqnarray*}%
so that $x=\mathrm{Scal}\left( x\right) +\mathrm{Vec}\left( x\right) $, for
any $x\in \mathbb{C}\otimes \mathbb{O}$.

\item The inner product, or scalar product, $\left\langle \cdot ,\cdot
\right\rangle :\left( \mathbb{C}\otimes \mathbb{O}\right) ^{2}\mathbb{%
\rightarrow C}$ is defined as \cite[Eq. (5)]{Dundarer and Gursey}%
\begin{equation*}
2\left\langle x,y\right\rangle \equiv x\overline{y}+y\overline{x}\equiv 
\overline{x}y+\overline{y}x.
\end{equation*}%
Note that $\left\langle x,y\right\rangle \equiv \left\langle
y,x\right\rangle $ and $\left\langle x,y\right\rangle \equiv \left\langle 
\overline{x},\overline{y}\right\rangle $.

\item The associator $\left[ \cdot ,\cdot ,\cdot \right] :\left( \mathbb{C}%
\otimes \mathbb{O}\right) ^{3}\mathbb{\rightarrow C}$ is defined as \cite[%
Eq. (12)]{Dundarer and Gursey}%
\begin{equation*}
\left[ x,y,z\right] \equiv \left( xy\right) z-x\left( yz\right) .
\end{equation*}

\item Let $\mathbb{C}\otimes \mathbb{A}\cong \mathbb{C}\otimes \mathbb{H}$
(with the choice of name referring to Associative) denote a specific, but
otherwise arbitrary, embedding (of which there are numerous \cite{Baez}) of $%
\mathbb{C}\otimes \mathbb{H}$ into $\mathbb{C}\otimes \mathbb{O}$, and let $%
\mathbb{C}\otimes \mathbb{B}\equiv \left( \mathbb{C}\otimes \mathbb{O}%
\right) \setminus \left( \mathbb{C}\otimes \mathbb{A}\right) $ denote its
non-associative complement.

\item Spacetime is assumed globally Minkowskian, with metric $\eta _{\mu \nu
}\equiv \left[ \mathrm{diag}\left( -1,1,1,1\right) \right] _{\mu \nu }$ and
parametrized by Cartesian coordinates $x^{\mu }$.

\item A \textit{Lorentz invariant} basis (over $\mathbb{C}$) for $\mathbb{C}%
\otimes \mathbb{A}$ is taken to be $\mathrm{e}_{\mu }\equiv \left( \mathrm{i}%
,\mathrm{e}_{1},\mathrm{e}_{2},\mathrm{e}_{3}\right) \in \left( \mathbb{C}%
\otimes \mathbb{A}\right) ^{-}$ [note Roman typed letters], where $\mathrm{i}
$ is the standard complex imaginary unit, and $\mathrm{e}_{i}\in \mathrm{Vec}%
\left( \mathbb{A}\right) $ are the imaginary quaternion units obeying $%
\mathrm{e}_{i}\mathrm{e}_{j}=-\delta _{ij}+\varepsilon _{ij}{}^{k}\mathrm{e}%
_{k}$, where $\varepsilon _{ijk}$ is the Levi-Civita pseudo-tensor with $%
\varepsilon _{123}=+1$. Note that $\overline{\mathrm{e}}_{\mu }^{\ast }=-%
\mathrm{e}_{\mu }$ and that $\left\langle \mathrm{e}_{\mu },\mathrm{e}_{\nu
}\right\rangle =\eta _{\mu \nu }$.

\item Let $\mathrm{M}_{n}\left( \mathbb{D}\right) $ denote the set of $n$%
-dimensional square matrices over some associative domain $\mathbb{D}$.
\end{itemize}

\section{Lorentz transformations}

Define $S_{\mu \nu }\in \mathbb{C}\otimes \mathrm{Vec}\left( \mathbb{A}%
\right) $ and $V_{\mu \nu }\in \mathrm{M}_{4}\left( \mathrm{i}\mathbb{R}%
\right) $, respectively, by%
\begin{eqnarray}
+4\mathrm{i}S_{\mu \nu } &\equiv &\mathrm{e}_{\mu }\overline{\mathrm{e}}%
_{\nu }-\mathrm{e}_{\nu }\overline{\mathrm{e}}_{\mu }\Leftrightarrow
\label{Eq:SGensL} \\
-4\mathrm{i}S_{\mu \nu }^{\ast } &\equiv &\overline{\mathrm{e}}_{\mu }%
\mathrm{e}_{\nu }-\overline{\mathrm{e}}_{\nu }\mathrm{e}_{\mu },
\label{Eq:SGensR}
\end{eqnarray}%
using for the bi-implication the property $\overline{\mathrm{e}}_{\mu
}^{\ast }=-\mathrm{e}_{\mu }$; and%
\begin{equation*}
\mathrm{i}\left( V_{\mu \nu }\right) ^{\rho }{}_{\sigma }\equiv \delta _{\mu
}^{\rho }\eta _{\nu \sigma }-\delta _{\nu }^{\rho }\eta _{\mu \sigma }.
\end{equation*}%
As is well-known, $V_{\mu \nu }$ are the generators of the vector
representation of the Lorentz group. Lemma \ref{Lemma:LorentzAlgebraS}
establishes that $S_{\mu \nu }$ are the generators of one of the two spinor
representations of the Lorentz group; the other spinor representation is
generated by $-S_{\mu \nu }^{\ast }$. By exponentiation, Lemma \ref%
{Lemma:DoubleCover} implies that%
\begin{equation}
\overline{\Lambda }_{S}^{\ast }\overline{\mathrm{e}}^{\rho }\Lambda
_{S}=\left( \Lambda _{V}\right) ^{\rho }{}_{\sigma }\overline{\mathrm{e}}%
^{\sigma },  \label{Eq:DoubleCoverFinite}
\end{equation}%
where $\Lambda _{S}\in \mathbb{C}\otimes \mathbb{A}$ and $\Lambda _{V}\in 
\mathrm{M}_{4}\left( \mathbb{R}\right) $, respectively, are defined by%
\begin{eqnarray*}
\Lambda _{S} &\equiv &\exp \left( -\frac{\mathrm{i}}{2}\theta ^{\mu \nu
}S_{\mu \nu }\right) , \\
\Lambda _{V} &\equiv &\exp \left( -\frac{\mathrm{i}}{2}\theta ^{\mu \nu
}V_{\mu \nu }\right) ,
\end{eqnarray*}%
with $\theta ^{\mu \nu }=-\theta ^{\nu \mu }\in \mathbb{R}$.

\begin{remark}
\normalfont Eq. (\ref{Eq:DoubleCoverFinite}) is the analogue of the relation 
$\Lambda _{1/2}^{-1}\gamma ^{\mu }\Lambda _{1/2}=\left( \Lambda _{V}\right)
^{\mu }{}_{\nu }\gamma ^{\nu }$ in the standard Dirac formalism \cite[Eq.
(3.29)]{Peskin and Schroeder}, where $\Lambda _{S}\equiv \exp \left( -\frac{%
\mathrm{i}}{2}\theta ^{\mu \nu }\sigma _{\mu \nu }\right) $ and $4\mathrm{i}%
\sigma _{\mu \nu }\equiv \left[ \gamma _{\mu },\gamma _{\nu }\right] $, with 
$\gamma _{\mu }$ being the standard flat spacetime Dirac gamma matrices.
There is a difference, though: $\overline{\Lambda }_{S}^{\ast }$ does 
\textit{only} equal $\Lambda _{S}^{-1}$ for spatial rotations; for pure
boosts, for instance, it equals $\Lambda _{S}$ itself.
\end{remark}

Consider spinor fields $\alpha =\alpha \left( x\right) \in \mathbb{C}\otimes 
\mathbb{A}$ and $\beta =\beta \left( x\right) \in \mathbb{C}\otimes \mathbb{B%
}$ transforming under Lorentz transformations as (suppressing the Lorentz
transformation of the arguments)%
\begin{eqnarray}
\alpha ^{\prime } &=&\Lambda _{S}\alpha ,  \label{Eq:TransLorentzAlpha} \\
\beta ^{\prime } &=&\overline{\Lambda }_{S}^{\ast }\beta \equiv \beta
\Lambda _{S}^{\ast },  \label{Eq:TransLorentzBeta}
\end{eqnarray}%
using Eq. (\ref{Eq:abIdentity}). These transformations are compatible, as
they should be, with group composition (from the left): $\alpha ^{\prime
\prime }=\Lambda _{2}\left( \Lambda _{1}\alpha \right) =\Lambda _{3}\alpha $
and $\beta ^{\prime \prime }=\overline{\Lambda }_{2}^{\ast }\left( \overline{%
\Lambda }_{1}^{\ast }\beta \right) =\overline{\Lambda }_{3}^{\ast }\beta $,
where $\Lambda _{3}\equiv \Lambda _{2}\Lambda _{1}$ [suppressing here the $S$
in $\Lambda _{S}$]. The relation for $\alpha $ follows by associativity, as
all quantities belong to $\mathbb{C}\otimes \mathbb{A}$; the relation for $%
\beta $ follows from Eq. (\ref{Eq:aabIdentity}).

\begin{remark}
\label{Remark:NonassocBetaRep}\normalfont Note that the representation in
which $\beta $ transforms, Eq. (\ref{Eq:TransLorentzBeta}), exists due to
the \textit{non}associative identity $\left( aa^{\prime }\right) b\equiv
a^{\prime }\left( ab\right) $, Eq. (\ref{Eq:aabIdentity}), and thus can%
\textit{not} correspond to any representation encountered in usual
associative formalisms.
\end{remark}

\begin{proposition}
\label{Prop:LorentInvariantIPs}The quantities $\left\langle \alpha ^{\ast },%
\overline{\mathrm{e}}^{\rho }\partial _{\rho }\alpha \right\rangle $ and $%
\left\langle \beta ^{\ast },\overline{\mathrm{e}}^{\rho }\partial _{\rho
}\beta \right\rangle $ are separately globally Lorentz invariant.
\end{proposition}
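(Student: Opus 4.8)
The plan is to reduce both statements to one algebraic fact: the inner product $\langle\cdot,\cdot\rangle$ is invariant when both of its arguments are multiplied (on the same side) by a fixed element of unit octonionic norm. The two sectors then differ only in \emph{which} side the multiplication acts on, the $\alpha$-sector being associative and the $\beta$-sector requiring the non-associative identities. I would first record the norm property of $\Lambda_S$: since $S_{\mu\nu}\in\mathbb{C}\otimes\mathrm{Vec}(\mathbb{A})$ obeys $\overline{S}_{\mu\nu}=-S_{\mu\nu}$, octonionic conjugation of the defining exponential gives $\overline{\Lambda}_S=\exp(+\tfrac{\mathrm{i}}{2}\theta^{\mu\nu}S_{\mu\nu})=\Lambda_S^{-1}$, whence $\Lambda_S\overline{\Lambda}_S=\overline{\Lambda}_S\Lambda_S=1$ and likewise $\Lambda_S^{*}\overline{\Lambda}_S^{*}=1$; in particular $(\overline{\Lambda}_S^{*})^{-1}=\Lambda_S^{*}$. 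I would also invoke the composition-algebra identities $\langle ax,ay\rangle=(a\overline{a})\langle x,y\rangle$ and $\langle xa,ya\rangle=\langle x,y\rangle(\overline{a}a)$, valid for all $x,y\in\mathbb{C}\otimes\mathbb{O}$ (they follow by linearizing multiplicativity of the norm $x\mapsto x\overline{x}$), which for $a=\Lambda_S$ or $a=\Lambda_S^{*}$ reduce to plain invariance because the scalar $a\overline a$ equals $1$.

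For the $\alpha$-sector, under a global Lorentz transformation $x'^{\mu}=(\Lambda_V)^{\mu}{}_{\nu}x^{\nu}$ one has $\partial'_{\rho}=(\Lambda_V^{-1})^{\sigma}{}_{\rho}\partial_{\sigma}$ and $\alpha'=\Lambda_S\alpha$. Because every factor lies in the associative algebra $\mathbb{C}\otimes\mathbb{A}$, I would compute $\overline{\mathrm{e}}^{\rho}\partial'_{\rho}\alpha'=(\Lambda_V^{-1})^{\sigma}{}_{\rho}\,\overline{\mathrm{e}}^{\rho}\Lambda_S\,\partial_{\sigma}\alpha$ and then substitute $\overline{\mathrm{e}}^{\rho}\Lambda_S=\Lambda_S^{*}(\Lambda_V)^{\rho}{}_{\tau}\overline{\mathrm{e}}^{\tau}$, which comes from Eq.~(\ref{Eq:DoubleCoverFinite}) together with $(\overline{\Lambda}_S^{*})^{-1}=\Lambda_S^{*}$. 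The two $\Lambda_V$ matrices contract to $\delta^{\sigma}_{\tau}$, leaving $(\overline{\mathrm{e}}^{\rho}\partial_{\rho}\alpha)'=\Lambda_S^{*}(\overline{\mathrm{e}}^{\rho}\partial_{\rho}\alpha)$. Since also $(\alpha^{*})'=(\Lambda_S\alpha)^{*}=\Lambda_S^{*}\alpha^{*}$, both arguments of the inner product are left-multiplied by $\Lambda_S^{*}$, and the left-handed composition identity with $\Lambda_S^{*}\overline{\Lambda}_S^{*}=1$ closes the case.

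For the $\beta$-sector associativity fails, and this is the step I expect to be the main obstacle: the rearrangement must be carried out entirely with the $\mathbb{A}$--$\mathbb{B}$ identities, keeping scrupulous track of bracketings. Starting from $\beta'=\beta\Lambda_S^{*}$, I would use Eq.~(\ref{Eq:abIdentity}) to write $(\partial_{\sigma}\beta)\Lambda_S^{*}=\overline{\Lambda}_S^{*}(\partial_{\sigma}\beta)$, then apply Eq.~(\ref{Eq:aabIdentity}) to move the associative factor across the $\mathbb{B}$-valued derivative, $\overline{\mathrm{e}}^{\rho}\bigl[\overline{\Lambda}_S^{*}(\partial_{\sigma}\beta)\bigr]=(\overline{\Lambda}_S^{*}\overline{\mathrm{e}}^{\rho})(\partial_{\sigma}\beta)$. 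Inserting $\overline{\Lambda}_S^{*}\overline{\mathrm{e}}^{\rho}=(\Lambda_V)^{\rho}{}_{\tau}\overline{\mathrm{e}}^{\tau}\Lambda_S^{-1}$ from Eq.~(\ref{Eq:DoubleCoverFinite}) and applying Eq.~(\ref{Eq:aabIdentity}) once more, $(\overline{\mathrm{e}}^{\tau}\Lambda_S^{-1})(\partial_{\sigma}\beta)=\Lambda_S^{-1}(\overline{\mathrm{e}}^{\tau}\partial_{\sigma}\beta)$, the $\Lambda_V$ factors again contract to $\delta^{\sigma}_{\tau}$ and yield $(\overline{\mathrm{e}}^{\rho}\partial_{\rho}\beta)'=\Lambda_S^{-1}(\overline{\mathrm{e}}^{\rho}\partial_{\rho}\beta)=(\overline{\mathrm{e}}^{\rho}\partial_{\rho}\beta)\Lambda_S$. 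As $(\beta^{*})'=(\beta\Lambda_S^{*})^{*}=\beta^{*}\Lambda_S$ transforms the same way, both arguments are now right-multiplied by $\Lambda_S$, and the right-handed composition identity with $\Lambda_S\overline{\Lambda}_S=1$ gives the invariance.

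The genuinely delicate points, and where I would spend the most care, are the correct ordering at each use of Eq.~(\ref{Eq:aabIdentity}) and the validity of the composition identity $\langle xa,ya\rangle=\langle x,y\rangle(\overline{a}a)$ even when $x,y$ lie in the non-associative complement $\mathbb{C}\otimes\mathbb{B}$. I would justify the latter either from multiplicativity of the octonionic norm on $\mathbb{C}\otimes\mathbb{O}$ or, failing a clean abstract route, by a short direct Cayley--Dickson computation confirming $2\langle xa,ya\rangle=a\,(2\langle x,y\rangle)\,\overline{a}=2\langle x,y\rangle\,(a\overline{a})$, the scalarity of $\langle x,y\rangle$ being what allows the unit-norm factor to be pulled out.
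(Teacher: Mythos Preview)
Your proof is correct and follows essentially the same approach as the paper: both use the double-cover relation (\ref{Eq:DoubleCoverFinite}) together with the $A$--$B$ identities (\ref{Eq:abIdentity}) and (\ref{Eq:aabIdentity}) in the $\beta$-sector, and plain associativity in the $\alpha$-sector. The only organizational difference is that you first establish $\overline{\Lambda}_S=\Lambda_S^{-1}$ and then finish by invoking the composition-algebra norm identities $\langle ax,ay\rangle=(a\overline{a})\langle x,y\rangle$ and $\langle xa,ya\rangle=\langle x,y\rangle(\overline{a}a)$, whereas the paper instead uses the adjoint-move identity (\ref{Eq:ipMoveLL}) to transfer the $\Lambda_S$-factor from one slot of the inner product to the other and never needs the unit-norm property of $\Lambda_S$ explicitly; these are equivalent composition-algebra manipulations.
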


\begin{proof}
Due to associativity, all quantities belonging to $\mathbb{C}\otimes \mathbb{%
A}$, the calculation for $\left\langle \alpha ^{\ast },\overline{\mathrm{e}}%
^{\rho }\partial _{\rho }\alpha \right\rangle $ is straightforward:%
\begin{eqnarray*}
\left\langle \alpha ^{\ast },\overline{\mathrm{e}}^{\rho }\partial _{\rho
}\alpha \right\rangle ^{\prime } &=&\left( \Lambda _{V}^{-1}\right) ^{\sigma
}{}_{\rho }\left\langle \left( \Lambda _{S}\alpha \right) ^{\ast },\overline{%
\mathrm{e}}^{\rho }\partial _{\sigma }\left( \Lambda _{S}\alpha \right)
\right\rangle \\
&=&\left( \Lambda _{V}^{-1}\right) ^{\sigma }{}_{\rho }\left\langle \Lambda
_{S}^{\ast }\alpha ^{\ast },\overline{\mathrm{e}}^{\rho }\Lambda
_{S}\partial _{\sigma }\alpha \right\rangle \\
&=&\left( \Lambda _{V}^{-1}\right) ^{\sigma }{}_{\rho }\left\langle \alpha
^{\ast },\overline{\Lambda }_{S}^{\ast }\overline{\mathrm{e}}^{\rho }\Lambda
_{S}\partial _{\sigma }\alpha \right\rangle \\
&=&\left( \Lambda _{V}^{-1}\right) ^{\sigma }{}_{\rho }\left( \Lambda
_{V}\right) ^{\rho }{}_{\tau }\left\langle \alpha ^{\ast },\overline{\mathrm{%
e}}^{\tau }\partial _{\sigma }\alpha \right\rangle \\
&=&\left\langle \alpha ^{\ast },\overline{\mathrm{e}}^{\rho }\partial _{\rho
}\alpha \right\rangle ,
\end{eqnarray*}%
using Eqs. (\ref{Eq:DoubleCoverFinite}) and (\ref{Eq:ipMoveLL}). Due to
nonassociativity, the calculation for $\left\langle \beta ^{\ast },\overline{%
\mathrm{e}}^{\rho }\partial _{\rho }\beta \right\rangle $ requires a bit
more care:%
\begin{eqnarray*}
\left\langle \beta ^{\ast },\overline{\mathrm{e}}^{\rho }\partial _{\rho
}\beta \right\rangle ^{\prime } &=&\left( \Lambda _{V}^{-1}\right) ^{\sigma
}{}_{\rho }\left\langle \left( \overline{\Lambda }_{S}^{\ast }\beta \right)
^{\ast },\overline{\mathrm{e}}^{\rho }\partial _{\sigma }\left( \overline{%
\Lambda }_{S}^{\ast }\beta \right) \right\rangle \\
&=&\left( \Lambda _{V}^{-1}\right) ^{\sigma }{}_{\rho }\left\langle \left( 
\overline{\Lambda }_{S}^{\ast }\beta \right) ^{\ast },\overline{\mathrm{e}}%
^{\rho }\left( \overline{\Lambda }_{S}^{\ast }\partial _{\sigma }\beta
\right) \right\rangle \\
&=&\left( \Lambda _{V}^{-1}\right) ^{\sigma }{}_{\rho }\left\langle 
\overline{\Lambda }_{S}\beta ^{\ast },\left( \overline{\Lambda }_{S}^{\ast }%
\overline{\mathrm{e}}^{\rho }\right) \partial _{\sigma }\beta \right\rangle
\\
&=&\left( \Lambda _{V}^{-1}\right) ^{\sigma }{}_{\rho }\left\langle \beta
^{\ast },\Lambda _{S}\left[ \left( \overline{\Lambda }_{S}^{\ast }\overline{%
\mathrm{e}}^{\rho }\right) \partial _{\sigma }\beta \right] \right\rangle \\
&=&\left( \Lambda _{V}^{-1}\right) ^{\sigma }{}_{\rho }\left\langle \beta
^{\ast },\left( \overline{\Lambda }_{S}^{\ast }\overline{\mathrm{e}}^{\rho
}\Lambda _{S}\right) \partial _{\sigma }\beta \right\rangle \\
&=&\left( \Lambda _{V}^{-1}\right) ^{\sigma }{}_{\rho }\left( \Lambda
_{V}\right) ^{\rho }{}_{\tau }\left\langle \beta ^{\ast },\overline{\mathrm{e%
}}^{\tau }\partial _{\sigma }\beta \right\rangle \\
&=&\left\langle \beta ^{\ast },\overline{\mathrm{e}}^{\rho }\partial _{\rho
}\beta \right\rangle ,
\end{eqnarray*}%
using Eqs. (\ref{Eq:DoubleCoverFinite}), (\ref{Eq:ipMoveLL}) and (\ref%
{Eq:aabIdentity}).
\end{proof}

\section{Gauge transformations}

An interesting question is whether the Lorentz invariant quantities $%
\left\langle \alpha ^{\ast },\overline{\mathrm{e}}^{\rho }\partial _{\rho
}\alpha \right\rangle $ and $\left\langle \beta ^{\ast },\overline{\mathrm{e}%
}^{\rho }\partial _{\rho }\beta \right\rangle $, as considered in
Proposition \ref{Prop:LorentInvariantIPs}, possess some local gauge freedom.

Turning attention first to $\alpha $, consider a gauge transformation of the
form $\alpha ^{\prime }=\alpha U^{-1}$, where $U\equiv \exp \left( u\right)
\in \mathbb{C}\otimes \mathbb{A}$ (note exponentiation of $u\in \mathbb{C}%
\otimes \mathbb{A}$). This gauge transformation is compatible with group
composition (from the left) in the sense that $\alpha ^{\prime \prime
}=\left( \alpha U_{1}^{-1}\right) U_{2}^{-1}=\alpha U_{3}^{-1}$, where $%
U_{3}\equiv U_{2}U_{1}$; and it commutes with the Lorentz transformation $%
\alpha ^{\prime }=\Lambda _{S}\alpha $ (different primes not to be confused)
due to associativity, $0=\left[ \Lambda _{S},\alpha ,U^{-1}\right] \equiv
\left( \Lambda _{S}\alpha \right) U^{-1}-\Lambda _{S}\left( \alpha
U^{-1}\right) $, thus respecting the Coleman-Mandula theorem \cite[Sect. 24.B%
]{Weinberg}.

\begin{proposition}
The quantity $\left\langle \alpha ^{\ast },\overline{\mathrm{e}}^{\rho
}\partial _{\rho }\alpha \right\rangle $ is invariant under the global gauge
transformation $\alpha ^{\prime }=\alpha U^{-1}$, if and only if $\overline{U%
}^{\ast }=U^{-1}$.
\end{proposition}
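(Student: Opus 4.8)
The plan is to exploit that every field here lives in the \emph{associative} subalgebra $\mathbb{C}\otimes\mathbb{A}\cong\mathbb{C}\otimes\mathbb{H}$ and that the transformation is \emph{global}, so $U$ is constant and commutes with $\partial_\rho$. I will work with the defining form $2\langle x,y\rangle\equiv x\overline{y}+y\overline{x}$, together with the anti-automorphism/involution rules $\overline{xy}=\overline{y}\,\overline{x}$ and $(xy)^{\ast}=x^{\ast}y^{\ast}$, the commutativity of the two conjugations, and associativity throughout. A first useful observation is that the stated condition has the equivalent forms $\overline{U}^{\ast}=U^{-1}\Leftrightarrow\overline{U}\,U^{\ast}=1\Leftrightarrow(U^{\ast})^{-1}=\overline{U}$, which I will invoke repeatedly.

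\textbf{Sufficiency.} First I would substitute $\alpha'=\alpha U^{-1}$, giving $\alpha'^{\ast}=\alpha^{\ast}(U^{\ast})^{-1}$ and, since $\partial_\rho U^{-1}=0$, $\overline{\mathrm{e}}^{\rho}\partial_\rho\alpha'=\overline{\mathrm{e}}^{\rho}(\partial_\rho\alpha)U^{-1}$. Expanding $2\langle\alpha'^{\ast},\overline{\mathrm{e}}^{\rho}\partial_\rho\alpha'\rangle$ with the chosen form produces two terms. In the first the $U$-dependent factors collect into $(U^{\ast})^{-1}(\overline{U})^{-1}=(\overline{U}\,U^{\ast})^{-1}$, and in the second into $U^{-1}(\overline{U}^{\ast})^{-1}=U^{-1}U$; under the condition both collapse to the identity, returning exactly $\alpha^{\ast}\overline{(\overline{\mathrm{e}}^{\rho}\partial_\rho\alpha)}+(\overline{\mathrm{e}}^{\rho}\partial_\rho\alpha)\overline{\alpha^{\ast}}=2\langle\alpha^{\ast},\overline{\mathrm{e}}^{\rho}\partial_\rho\alpha\rangle$. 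This direction is a direct substitution once the factors are grouped correctly.

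\textbf{Necessity.} For the converse I would read the same expansion as an identity that must hold for \emph{all} admissible $\alpha$. The key point is that at a fixed point one may prescribe $\alpha$ and its first derivatives independently (e.g.\ an affine field $\alpha(x)=c_0+c_\rho x^{\rho}$), so that $\xi\equiv\alpha^{\ast}$ and $\zeta\equiv\overline{\mathrm{e}}^{\rho}\partial_\rho\alpha$ are independent and each ranges over all of $\mathbb{C}\otimes\mathbb{A}$. Writing $a\equiv(U^{\ast})^{-1}$, $b\equiv U^{-1}$, invariance becomes $\xi(a\overline{b}-1)\overline{\zeta}+\zeta(b\overline{a}-1)\overline{\xi}=0$ for all $\xi,\zeta$. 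Setting $\xi=1$ and then $\zeta=1$ forces $a\overline{b}+b\overline{a}=2$ and reduces the identity to $R\,\overline{\zeta}=\zeta R$ for all $\zeta$, with $R\equiv a\overline{b}-1$. Testing $\zeta=\mathrm{e}_k$ and using $\mathrm{e}_j\mathrm{e}_k+\mathrm{e}_k\mathrm{e}_j=-2\delta_{jk}$ gives $\{R,\mathrm{e}_k\}=0$ for each $k$, whose only solution in $\mathbb{C}\otimes\mathbb{A}$ is $R=0$. Hence $a\overline{b}=(\overline{U}\,U^{\ast})^{-1}=1$, i.e.\ $\overline{U}\,U^{\ast}=1$, which is exactly $\overline{U}^{\ast}=U^{-1}$.

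I expect the necessity direction to be the main obstacle, on two counts. First, one must justify the passage from ``invariant for every field $\alpha$'' to ``the bilinear identity holds for all independent $\xi,\zeta$'', which rests entirely on the independence of $\alpha$ and $\partial_\rho\alpha$ at a point and on $\overline{\mathrm{e}}^{\rho}$ spanning the algebra. Second, the algebraic step $\{R,\mathrm{e}_k\}=0\ \forall k\Rightarrow R=0$ requires care: decomposing $R=r_0+r_j\mathrm{e}_j$, the scalar part contributes $2r_0\mathrm{e}_k$ while the vector part contracts to the scalar $-2r_k$, so that a vector-equals-scalar condition forces every coefficient to vanish. Sufficiency, by contrast, is routine.
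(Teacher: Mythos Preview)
Your argument is correct. The sufficiency direction is essentially the paper's computation done by hand: where the paper uses the shift identity $\langle xy,z\rangle=\langle x,z\overline{y}\rangle$ once to move $(U^{\ast})^{-1}$ across and obtain the single clean factor $(\overline{U}^{\ast}U)^{-1}$ on the right, you instead expand $2\langle\cdot,\cdot\rangle$ by its definition and track the two terms separately. Both routes use only associativity in $\mathbb{C}\otimes\mathbb{A}$ and reach the same endpoint; the shift identity just shortens the bookkeeping.

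The genuine difference is in necessity. The paper stops at ``invariance requires $\overline{U}^{\ast}U=1$'' without further justification, implicitly relying on nondegeneracy of the inner product and the freedom in $\alpha$. Your argument makes this explicit: you note that $\xi=\alpha^{\ast}$ and $\zeta=\overline{\mathrm{e}}^{\rho}\partial_{\rho}\alpha$ can be prescribed independently and surjectively onto $\mathbb{C}\otimes\mathbb{A}$ (e.g.\ via an affine field, using that $\overline{\mathrm{e}}^{0}$ is invertible), reduce to $R\overline{\zeta}=\zeta R$ with $R=a\overline{b}-1$, and then kill $R$ by anticommutation with the $\mathrm{e}_{k}$. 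This is more work but actually closes the ``only if'' direction, which the paper leaves as an assertion. One small simplification you could use: from $a\overline{b}+b\overline{a}=2$ you already have $\overline{R}=-R$, so $R$ is pure vector from the start and $\{R,\mathrm{e}_{k}\}=-2r_{k}$ immediately.
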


\begin{proof}
Due to associativity, all quantities belonging to $\mathbb{C}\otimes \mathbb{%
A}$, the calculation is straightforward:%
\begin{eqnarray*}
\left\langle \alpha ^{\ast },\overline{\mathrm{e}}^{\rho }\partial _{\rho
}\alpha \right\rangle ^{\prime } &=&\left\langle \left( \alpha U^{-1}\right)
^{\ast },\overline{\mathrm{e}}^{\rho }\partial _{\rho }\left( \alpha
U^{-1}\right) \right\rangle \\
&=&\left\langle \alpha ^{\ast }\left( U^{\ast }\right) ^{-1},\overline{%
\mathrm{e}}^{\rho }\left( \partial _{\rho }\alpha \right) U^{-1}\right\rangle
\\
&=&\left\langle \alpha ^{\ast },\overline{\mathrm{e}}^{\rho }\left( \partial
_{\rho }\alpha \right) U^{-1}\left( \overline{U}^{\ast }\right)
^{-1}\right\rangle \\
&=&\left\langle \alpha ^{\ast },\overline{\mathrm{e}}^{\rho }\left( \partial
_{\rho }\alpha \right) \left( \overline{U}^{\ast }U\right)
^{-1}\right\rangle ,
\end{eqnarray*}%
using Eq. (\ref{Eq:ipMoveLR}). Invariance requires $\overline{U}^{\ast }U=1$.
\end{proof}

\vspace{0pt}As $U\equiv \exp \left( u\right) \in \mathbb{C}\otimes \mathbb{A}
$, the condition $\overline{U}^{\ast }=U^{-1}$ is equivalent to $\overline{u}%
^{\ast }=-u\Leftrightarrow u\in \left( \mathbb{C}\otimes \mathbb{A}\right)
^{-}$. As $\left( \mathbb{C}\otimes \mathbb{A}\right) ^{-}$ is isomorphic to
the Lie algebra $\mathfrak{su}\left( 2\right) \oplus \mathfrak{u}\left(
1\right) $, the gauge transformation $\alpha ^{\prime }=\alpha U^{-1}=\alpha
\exp \left( -u\right) $ therefore corresponds to a \textit{global} $\mathrm{%
SU}\left( 2\right) \times \mathrm{U}\left( 1\right) $ gauge symmetry of $%
\left\langle \alpha ^{\ast },\overline{\mathrm{e}}^{\rho }\partial _{\rho
}\alpha \right\rangle $. In order to lift this global symmetry to a \textit{%
local} one, introduce an $\mathrm{SU}\left( 2\right) \times \mathrm{U}\left(
1\right) $ gauge connection $W_{\rho }\in \left( \mathbb{C}\otimes \mathbb{A}%
\right) ^{-}$ transforming standardly as%
\begin{equation}
W_{\rho }^{\prime }=UW_{\rho }U^{-1}-\left( \partial _{\rho }U\right) U^{-1}
\label{Eq:TransLorentzW}
\end{equation}%
under local $\mathrm{SU}\left( 2\right) \times \mathrm{U}\left( 1\right) $
gauge transformations.

\begin{proposition}
\label{Prop:CovDerAlpha}The $\mathrm{SU}\left( 2\right) \times \mathrm{U}%
\left( 1\right) $ covariant derivative of $\alpha $ is given by%
\begin{equation}
D_{\rho }\alpha \equiv \partial _{\rho }\alpha -\alpha W_{\rho }.
\label{Eq:CovDerAlpha}
\end{equation}
\end{proposition}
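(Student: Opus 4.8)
The plan is to justify the name ``covariant derivative'' by verifying that the combination in Eq. (\ref{Eq:CovDerAlpha}) transforms under local $\mathrm{SU}\left( 2\right) \times \mathrm{U}\left( 1\right) $ gauge transformations in precisely the same representation as $\alpha $ itself. Since $\alpha $ transforms from the right as $\alpha ^{\prime }=\alpha U^{-1}$, the requirement to establish is $\left( D_{\rho }\alpha \right) ^{\prime }=\left( D_{\rho }\alpha \right) U^{-1}$; equivalently, the inhomogeneous (connection) piece of $W_{\rho }$ must be engineered, via its transformation law Eq. (\ref{Eq:TransLorentzW}), to cancel the inhomogeneous piece generated by differentiating $U^{-1}$.

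Concretely, I would substitute $\alpha ^{\prime }=\alpha U^{-1}$ together with Eq. (\ref{Eq:TransLorentzW}) directly into $\left( D_{\rho }\alpha \right) ^{\prime }=\partial _{\rho }\alpha ^{\prime }-\alpha ^{\prime }W_{\rho }^{\prime }$. The first term is handled by the Leibniz rule together with the standard identity $\partial _{\rho }\left( U^{-1}\right) =-U^{-1}\left( \partial _{\rho }U\right) U^{-1}$ (itself obtained by differentiating $UU^{-1}=1$), giving $\partial _{\rho }\alpha ^{\prime }=\left( \partial _{\rho }\alpha \right) U^{-1}-\alpha U^{-1}\left( \partial _{\rho }U\right) U^{-1}$. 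For the second term, expanding $\alpha ^{\prime }W_{\rho }^{\prime }=\alpha U^{-1}\left[ UW_{\rho }U^{-1}-\left( \partial _{\rho }U\right) U^{-1}\right] $ and collapsing $U^{-1}U=1$ yields $\alpha W_{\rho }U^{-1}-\alpha U^{-1}\left( \partial _{\rho }U\right) U^{-1}$. Subtracting, the two copies of $\alpha U^{-1}\left( \partial _{\rho }U\right) U^{-1}$ cancel, leaving $\left( \partial _{\rho }\alpha -\alpha W_{\rho }\right) U^{-1}=\left( D_{\rho }\alpha \right) U^{-1}$, as required.

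The point I would emphasise is that, unlike the $\beta $-sector, there is essentially no genuine obstacle here: because $\alpha $, $U$, and $W_{\rho }$ all lie in the associative subalgebra $\mathbb{C}\otimes \mathbb{A}$, every regrouping above --- in particular the collapse $U^{-1}U=1$ and the factoring out of $U^{-1}$ on the right --- is licensed by ordinary associativity, with no associators to track. The only care needed is bookkeeping: the connection must act on $\alpha $ from the \emph{right} (matching the right action of $U^{-1}$ on $\alpha $), and the sign and ordering of the inhomogeneous term $-\left( \partial _{\rho }U\right) U^{-1}$ in Eq. (\ref{Eq:TransLorentzW}) must be exactly those that produce the cancellation. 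As a consistency check one may further observe that, since $W_{\rho }\in \left( \mathbb{C}\otimes \mathbb{A}\right) ^{-}$ and $\alpha \in \mathbb{C}\otimes \mathbb{A}$, the combination $D_{\rho }\alpha $ again belongs to $\mathbb{C}\otimes \mathbb{A}$, so the gauge structure remains compatible with the Lorentz transformation $\alpha ^{\prime }=\Lambda _{S}\alpha $ by the same vanishing-associator argument already invoked for $\alpha $ above.
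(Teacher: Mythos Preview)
Your proof is correct and follows essentially the same approach as the paper: direct substitution of $\alpha^{\prime}=\alpha U^{-1}$ and Eq.~(\ref{Eq:TransLorentzW}) into $(D_{\rho}\alpha)^{\prime}$, use of $\partial_{\rho}U^{-1}=-U^{-1}(\partial_{\rho}U)U^{-1}$, cancellation of the inhomogeneous pieces, and the observation that associativity of $\mathbb{C}\otimes\mathbb{A}$ licenses every regrouping. The paper's version is slightly terser but the logic is identical.
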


\begin{proof}
Due to associativity, all quantities belonging to $\mathbb{C}\otimes \mathbb{%
A}$, the calculation is straightforward:%
\begin{eqnarray*}
\left( D_{\rho }\alpha \right) ^{\prime } &=&\partial _{\rho }\left( \alpha
U^{-1}\right) -\left( \alpha U^{-1}\right) \left[ UW_{\rho }U^{-1}-\left(
\partial _{\rho }U\right) U^{-1}\right] \\
&=&\left( \partial _{\rho }\alpha \right) U^{-1}+\alpha \left( \partial
_{\rho }U^{-1}\right) -\alpha \left( W_{\rho }U^{-1}+\partial _{\rho
}U^{-1}\right) \\
&=&\left( \partial _{\rho }\alpha \right) U^{-1}-\alpha W_{\rho }U^{-1} \\
&=&\left( D_{\rho }\alpha \right) U^{-1},
\end{eqnarray*}%
using $\partial _{\rho }U^{-1}=-U^{-1}\left( \partial _{\rho }U\right)
U^{-1} $.
\end{proof}

\vspace{\baselineskip}\vspace{0pt}Due to this Proposition, the quantity $%
\left\langle \alpha ^{\ast },\overline{\mathrm{e}}^{\rho }D_{\rho }\alpha
\right\rangle $ is \textit{locally} $\mathrm{SU}\left( 2\right) \times 
\mathrm{U}\left( 1\right) $ invariant for the very same reason that $%
\left\langle \alpha ^{\ast },\overline{\mathrm{e}}^{\rho }\partial _{\rho
}\alpha \right\rangle $ is \textit{globally} $\mathrm{SU}\left( 2\right)
\times \mathrm{U}\left( 1\right) $ invariant, as previously proved.

Turning attention next to $\beta $, an interesting question is how it can be
coupled to the gauge field $W_{\rho }$ just introduced in connection with $%
\alpha $.

\begin{proposition}
\label{Prop:CovDerBeta}The most general $\mathrm{SU}\left( 2\right) \times 
\mathrm{U}\left( 1\right) $ covariant derivative of $\beta $ is given by%
\begin{eqnarray}
D_{\rho }\beta &=&\partial _{\rho }\beta +\frac{r}{2}\left( \beta W_{\rho
}+W_{\rho }\beta \right)  \notag \\
&\equiv &\partial _{\rho }\beta +\frac{r}{2}\left( \beta W_{\rho }+\beta 
\overline{W}_{\rho }\right)  \notag \\
&\equiv &\partial _{\rho }\beta +r\beta \mathrm{Scal}\left( W_{\rho }\right)
,  \label{Eq:CovDerBeta}
\end{eqnarray}%
where $r\in \mathbb{R}$. All three expressions are equivalent due to Eq. (%
\ref{Eq:abIdentity}) and the identity $2\mathrm{Scal}\left( x\right) \equiv
x+\overline{x}$.
\end{proposition}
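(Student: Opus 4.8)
The plan is to pin down $D_\rho\beta$ by writing the most general first-order, $\mathbb{C}$-linear coupling of $\beta\in\mathbb{C}\otimes\mathbb{B}$ to the connection $W_\rho\in\left(\mathbb{C}\otimes\mathbb{A}\right)^{-}$ and then imposing gauge covariance together with consistency under group composition. Since $\beta$ enters linearly and $W_\rho\in\mathbb{C}\otimes\mathbb{A}$, the only bilinear terms available are $\beta W_\rho$ and $W_\rho\beta$, so I would begin from the ansatz $D_\rho\beta=\partial_\rho\beta+a\,\beta W_\rho+b\,W_\rho\beta$ with $a,b\in\mathbb{C}$. Rewriting $W_\rho\beta=\beta\overline{W}_\rho$ by Eq. (\ref{Eq:abIdentity}) collapses this to $D_\rho\beta=\partial_\rho\beta+\beta M_\rho$ with $M_\rho\equiv aW_\rho+b\overline{W}_\rho$, and $2\mathrm{Scal}\left(x\right)\equiv x+\overline{x}$ gives $M_\rho=\left(a+b\right)\mathrm{Scal}\left(W_\rho\right)+\left(a-b\right)\mathrm{Vec}\left(W_\rho\right)$; the task is thus to show that covariance forces the $\mathrm{Vec}$ (i.e. $\mathrm{SU}(2)$) piece to vanish.

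Next I would posit a right action $\beta'=\beta V$, $V=V\left(U\right)\in\mathbb{C}\otimes\mathbb{A}$ (so that $\beta'\in\mathbb{C}\otimes\mathbb{B}$), and demand two things. First, covariance $\left(D_\rho\beta\right)'=\left(D_\rho\beta\right)V$: expanding $\partial_\rho\left(\beta V\right)=\left(\partial_\rho\beta\right)V+\beta\,\partial_\rho V$ and reducing every triple product of one $\mathbb{B}$- and two $\mathbb{A}$-factors by Eqs. (\ref{Eq:abIdentity}) and (\ref{Eq:aabIdentity}), I expect the condition to collapse to the purely associative relation $\partial_\rho V+M_\rho'V=VM_\rho$ in $\mathbb{C}\otimes\mathbb{A}$, where $M_\rho'$ is built from the transformed $W_\rho'$ of Eq. (\ref{Eq:TransLorentzW}). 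Note that for $V=U$ and $a=1,b=0$ this is solved identically by the connection rule (\ref{Eq:TransLorentzW}) itself, so covariance alone does \emph{not} fix the coupling. Second, consistency with composition from the left, $\beta''=\left(\beta V_1\right)V_2=\beta V_3$: here non-associativity is decisive, because $\left(\beta V_1\right)V_2=\beta\left(V_1V_2\right)$ holds only when $V_1$ and $V_2$ commute.

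The heart of the argument is to combine these to force the coupling into the abelian sector. Composition consistency requires the image $\left\{V\left(U\right)\right\}$ to be commutative; since $\left(\mathbb{C}\otimes\mathbb{A}\right)^{-}\cong\mathfrak{su}\left(2\right)\oplus\mathfrak{u}\left(1\right)$ and only the central $\mathfrak{u}\left(1\right)$ (scalar, $\mathrm{i}\mathbb{R}$) part is commutative, $V$ must be a pure phase $V=\exp\left(\xi\right)$ with $\xi\in\mathrm{Scal}\left(\mathbb{C}\otimes\mathbb{A}\right)$. Feeding this back into $\partial_\rho V+M_\rho'V=VM_\rho$ and separating scalar from vector parts --- using that $\mathrm{Scal}\left(W_\rho\right)$ transforms as an abelian $\mathrm{U}\left(1\right)$ connection while $\mathrm{Vec}\left(W_\rho\right)$ rotates inhomogeneously under $\mathrm{SU}\left(2\right)$ --- the $\mathrm{Vec}\left(W_\rho\right)$-term cannot be balanced by the scalar $V$ and must be absent, i.e. $a-b=0$. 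Writing $r\equiv a+b$ then gives $M_\rho=r\,\mathrm{Scal}\left(W_\rho\right)$ and $V=\exp\left(r\,\mathrm{Scal}\left(u\right)\right)$ with $U=\exp\left(u\right)$; finally, demanding $V$ be unitary, $\overline{V}^{\ast}=V^{-1}$, as a gauge transformation should be, pins $r\in\mathbb{R}$.

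The three displayed forms are then immediate: $M_\rho=r\,\mathrm{Scal}\left(W_\rho\right)=\tfrac{r}{2}\left(W_\rho+\overline{W}_\rho\right)$ by $2\mathrm{Scal}\left(x\right)\equiv x+\overline{x}$, and $\beta\overline{W}_\rho=W_\rho\beta$ by Eq. (\ref{Eq:abIdentity}), whence $\beta M_\rho=\tfrac{r}{2}\left(\beta W_\rho+\beta\overline{W}_\rho\right)=\tfrac{r}{2}\left(\beta W_\rho+W_\rho\beta\right)$. I expect the main obstacle to be the composition-consistency step: making it precise, from the non-associative multiplication rules, that a non-abelian $\mathrm{SU}\left(2\right)$ action on $\beta$ is simply not representable by multiplication by $\mathbb{C}\otimes\mathbb{A}$-elements, so that $\beta$ can carry only the abelian hypercharge $\mathrm{U}\left(1\right)$ --- precisely the left/right asymmetry the paper is designed to produce.
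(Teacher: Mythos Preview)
Your approach diverges from the paper's in a crucial way, and the divergence contains a genuine gap. The paper's proof does \emph{not} use gauge covariance or composition consistency at all; it imposes \emph{Lorentz} covariance. Writing $V_\rho(\beta)=r_1W_\rho\beta+r_2\beta W_\rho$ and demanding $(D_\rho\beta)'=(\Lambda_V^{-1})^\sigma{}_\rho\,\overline{\Lambda}_S^{\ast}(D_\sigma\beta)$ under $\beta\mapsto\overline{\Lambda}_S^{\ast}\beta$, the paper finds the obstruction is exactly the associator $[\overline{\Lambda}_S^{\ast},\,r_1W_\sigma+r_2\overline{W}_\sigma,\,\beta]$, which vanishes for generic $\Lambda_S,\beta$ only if $r_1W_\sigma+r_2\overline{W}_\sigma\in\mathbb{C}$, i.e.\ $r_1=r_2$. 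That is the whole mechanism: Lorentz and gauge act from opposite sides on $\beta$, and non-associativity makes them clash unless the gauge piece is scalar.

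Your composition-consistency step, which carries all the weight in your argument, does not actually constrain anything. By Eq.~(\ref{Eq:baaIdentity}) one has $(\beta V_1)V_2=\beta(V_2V_1)$ for \emph{any} $V_1,V_2\in\mathbb{C}\otimes\mathbb{A}$, so the right action $\beta\mapsto\beta V(U)$ is compatible with group composition whenever $V$ is a homomorphism, $V(U_2U_1)=V(U_2)V(U_1)$ --- in particular for $V(U)=U$ itself, with no commutativity needed. You are tacitly demanding $(\beta V_1)V_2=\beta(V_1V_2)$, but that is not what ``compatible with composition'' requires; Eq.~(\ref{Eq:baaIdentity}) simply reverses the order, exactly as Eq.~(\ref{Eq:aabIdentity}) does for the Lorentz action on $\beta$ in the paper. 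Consequently your own example ($a=1$, $b=0$, $V=U$) survives both of your tests and gives a full $\mathrm{SU}(2)\times\mathrm{U}(1)$ coupling of $\beta$, contradicting the claimed conclusion. The missing ingredient is precisely the Lorentz check: it is the associator $[\overline{\Lambda}_S^{\ast},\beta,U]$ (equivalently, failure of the gauge and Lorentz actions on $\beta$ to commute) that kills the non-abelian piece, and your outline never invokes it.
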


\begin{proof}
Let $D_{\rho }\equiv \partial _{\rho }\beta +V_{\rho }\left( \beta \right) $%
. The most general $V_{\rho }\left( \beta \right) $ is given by a linear
combination over $\mathbb{R}$ of the eight possible terms%
\begin{eqnarray*}
&&\beta W_{\rho },\beta W_{\rho }^{\ast },\beta \overline{W}_{\rho },\beta 
\overline{W}_{\rho }^{\ast }, \\
&&W_{\rho }\beta ,W_{\rho }^{\ast }\beta ,\overline{W}_{\rho }\beta ,%
\overline{W}_{\rho }^{\ast }\beta .
\end{eqnarray*}%
Due to the identity $ab\equiv b\overline{a}$, Eq. (\ref{Eq:abIdentity}), and
the property $\overline{W}_{\rho }^{\ast }=-W_{\rho }$, following from $%
W_{\rho }\in \left( \mathbb{C}\otimes \mathbb{A}\right) ^{-}$, such a linear
combination can be rewritten as $V_{\rho }\left( \beta \right) =r_{1}W_{\rho
}\beta +r_{2}\beta W_{\rho }$, where $r_{1},r_{2}\in \mathbb{R}$. Under
(global) Lorentz transformations, this expression transforms as%
\begin{eqnarray*}
\left[ V_{\rho }\left( \beta \right) \right] ^{\prime } &=&\left( \Lambda
_{V}^{-1}\right) ^{\sigma }{}_{\rho }\left[ r_{1}W_{\sigma }\left( \overline{%
\Lambda }_{S}^{\ast }\beta \right) +r_{2}\left( \overline{\Lambda }%
_{S}^{\ast }\beta \right) W_{\sigma }\right] \\
&=&\left( \Lambda _{V}^{-1}\right) ^{\sigma }{}_{\rho }\left[ r_{1}W_{\sigma
}\left( \overline{\Lambda }_{S}^{\ast }\beta \right) +r_{2}\left( \beta
\Lambda _{S}^{\ast }\right) W_{\sigma }\right] \\
&=&\left( \Lambda _{V}^{-1}\right) ^{\sigma }{}_{\rho }\left[ r_{1}\left( 
\overline{\Lambda }_{S}^{\ast }W_{\sigma }\right) \beta +r_{2}\beta \left(
W_{\sigma }\Lambda _{S}^{\ast }\right) \right] \\
&=&\left( \Lambda _{V}^{-1}\right) ^{\sigma }{}_{\rho }\left[ r_{1}\left( 
\overline{\Lambda }_{S}^{\ast }W_{\sigma }\right) \beta +r_{2}\left( 
\overline{\Lambda }_{S}^{\ast }\overline{W}_{\sigma }\right) \beta \right] \\
&=&\left( \Lambda _{V}^{-1}\right) ^{\sigma }{}_{\rho }\left[ \overline{%
\Lambda }_{S}^{\ast }\left( r_{1}W_{\sigma }+r_{2}\overline{W}_{\sigma
}\right) \right] \beta ,
\end{eqnarray*}%
using Eqs. (\ref{Eq:abIdentity}) and (\ref{Eq:aabIdentity})-(\ref%
{Eq:baaIdentity}). For $D_{\rho }\beta $ to transform under global Lorentz
transformations as $\left( D_{\rho }\beta \right) ^{\prime }=\left( \Lambda
_{V}^{-1}\right) ^{\sigma }{}_{\rho }\overline{\Lambda }_{S}^{\ast }\left(
D_{\sigma }\beta \right) $, as it should, the following condition must thus
hold:%
\begin{eqnarray*}
\left[ V_{\rho }\left( \beta \right) \right] ^{\prime } &=&\left( \Lambda
_{V}^{-1}\right) ^{\sigma }{}_{\rho }\overline{\Lambda }_{S}^{\ast
}V_{\sigma }\left( \beta \right) \\
&=&\left( \Lambda _{V}^{-1}\right) ^{\sigma }{}_{\rho }\overline{\Lambda }%
_{S}^{\ast }\left[ r_{1}W_{\sigma }\beta +r_{2}\beta W_{\sigma }\right] \\
&=&\left( \Lambda _{V}^{-1}\right) ^{\sigma }{}_{\rho }\overline{\Lambda }%
_{S}^{\ast }\left[ \left( r_{1}W_{\sigma }+r_{2}\overline{W}_{\sigma
}\right) \beta \right] ,
\end{eqnarray*}%
using Eq. (\ref{Eq:abIdentity}). The two expressions for $\left[ V_{\rho
}\left( \beta \right) \right] ^{\prime }$ can only agree if%
\begin{eqnarray*}
0 &=&\left[ \overline{\Lambda }_{S}^{\ast }\left( r_{1}W_{\sigma }+r_{2}%
\overline{W}_{\sigma }\right) \right] \beta -\overline{\Lambda }_{S}^{\ast }%
\left[ \left( r_{1}W_{\sigma }+r_{2}\overline{W}_{\sigma }\right) \beta %
\right] \\
&\equiv &\left[ \overline{\Lambda }_{S}^{\ast },r_{1}W_{\sigma }+r_{2}%
\overline{W}_{\sigma },\beta \right] ,
\end{eqnarray*}%
i.e., if the three quantities in question associate. For general $\beta $
and $\Lambda _{S}$, this is only possible if%
\begin{eqnarray*}
\mathbb{C} &\ni &r_{1}W_{\sigma }+r_{2}\overline{W}_{\sigma } \\
&=&r_{1}\left[ \mathrm{Scal}\left( W_{\rho }\right) +\mathrm{Vec}\left(
W_{\rho }\right) \right] +r_{2}\left[ \mathrm{Scal}\left( W_{\rho }\right) -%
\mathrm{Vec}\left( W_{\rho }\right) \right] \\
&=&\left( r_{1}+r_{2}\right) \mathrm{Scal}\left( W_{\rho }\right) +\left(
r_{1}-r_{2}\right) \mathrm{Vec}\left( W_{\rho }\right) ,
\end{eqnarray*}%
which implies $r_{1}=r_{2}\equiv r/2$, thus concluding the proof.
\end{proof}

\vspace{\baselineskip}To wrap up matters nicely, the transformation of $%
\beta $ under local $\mathrm{SU}\left( 2\right) \times \mathrm{U}\left(
1\right) $ gauge transformations, or, effectively, local $\mathrm{U}\left(
1\right) $ gauge transformations, is now given.

\begin{proposition}
\label{Prop:TransGaugeBeta}The $\mathrm{U}\left( 1\right) $ local gauge
transformation of $\beta $ corresponding to Eq. (\ref{Eq:CovDerBeta}) is
given by%
\begin{equation*}
\beta ^{\prime }=\beta \exp \left( r\mathrm{Scal}\left( u\right) \right) ,
\end{equation*}%
where $u$, of course, is the parameter entering into the definition $U\equiv
\exp \left( u\right) $.
\end{proposition}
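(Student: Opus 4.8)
The plan is to verify directly that the stated ansatz $\beta' = \beta\exp(r\,\mathrm{Scal}(u))$ makes the covariant derivative $D_\rho\beta$ of Eq. (\ref{Eq:CovDerBeta}) transform homogeneously, i.e. $(D_\rho\beta)' = (D_\rho\beta)\exp(r\,\mathrm{Scal}(u))$, which is exactly the statement that $D_\rho\beta$ transforms as $\beta$ itself. The enabling observation, which I would record first, is that $\mathrm{Scal}(u)\in\mathrm{Scal}(\mathbb{C}\otimes\mathbb{A}) = \mathbb{C}$ is an ordinary complex number; hence $G\equiv\exp(r\,\mathrm{Scal}(u))$ lies in $\mathbb{C}$ and thus commutes and associates with every element of $\mathbb{C}\otimes\mathbb{O}$. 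This is what lets me pull $G$ freely through the nonassociative $\beta\in\mathbb{C}\otimes\mathbb{B}$ below. Moreover, since $W_\rho,u\in(\mathbb{C}\otimes\mathbb{A})^-$ forces $\mathrm{Scal}(W_\rho),\mathrm{Scal}(u)\in\mathrm{i}\mathbb{R}$, the factor $G$ is a genuine $\mathrm{U}(1)$ phase, justifying the name.

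The key step is to determine how the scalar part of the connection transforms. Starting from $W_\rho' = UW_\rho U^{-1} - (\partial_\rho U)U^{-1}$, Eq. (\ref{Eq:TransLorentzW}), I would compute $\mathrm{Scal}(W_\rho')$ termwise. For the homogeneous piece, cyclicity of the scalar part in the associative algebra $\mathbb{C}\otimes\mathbb{A}$—$\mathrm{Scal}(xy) = \mathrm{Scal}(yx)$, immediate from $2\mathrm{Scal}(z)\equiv z+\overline z$ together with $\overline{xy} = \overline y\,\overline x$—gives $\mathrm{Scal}(UW_\rho U^{-1}) = \mathrm{Scal}(U^{-1}UW_\rho) = \mathrm{Scal}(W_\rho)$. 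For the inhomogeneous piece I would show $\mathrm{Scal}((\partial_\rho U)U^{-1}) = \partial_\rho\mathrm{Scal}(u)$: writing $\partial_\rho\exp(u) = \int_0^1 e^{su}(\partial_\rho u)e^{(1-s)u}\,ds$ and again invoking cyclicity collapses the integrand to the $s$-independent $\mathrm{Scal}(\partial_\rho u)$, whence the claim. (Equivalently, under $\mathbb{C}\otimes\mathbb{A}\cong\mathrm{M}_2(\mathbb{C})$ one has $\mathrm{Scal} = \tfrac12\mathrm{Tr}$ and $\tfrac12\mathrm{Tr}((\partial_\rho U)U^{-1}) = \tfrac12\partial_\rho\ln\det U = \tfrac12\partial_\rho\mathrm{Tr}(u) = \partial_\rho\mathrm{Scal}(u)$.) Together these give the abelian transformation law $\mathrm{Scal}(W_\rho') = \mathrm{Scal}(W_\rho) - \partial_\rho\mathrm{Scal}(u)$.

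With this in hand the remaining verification is mechanical. Substituting $\beta' = \beta G$ into $(D_\rho\beta)' = \partial_\rho\beta' + r\beta'\,\mathrm{Scal}(W_\rho')$ and using $\partial_\rho G = r(\partial_\rho\mathrm{Scal}(u))G$ (valid since $\mathrm{Scal}(u)$ is scalar-valued, so $G$ commutes with its own derivative), the term $\beta(\partial_\rho G) = r\beta G\,\partial_\rho\mathrm{Scal}(u)$ cancels exactly against the contribution $-r\beta G\,\partial_\rho\mathrm{Scal}(u)$ from $\mathrm{Scal}(W_\rho')$. What survives is $(\partial_\rho\beta)G + r\beta G\,\mathrm{Scal}(W_\rho) = [\partial_\rho\beta + r\beta\,\mathrm{Scal}(W_\rho)]G = (D_\rho\beta)G$, the factor $G\in\mathbb{C}$ being pulled out freely throughout. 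This is precisely $(D_\rho\beta)' = (D_\rho\beta)\exp(r\,\mathrm{Scal}(u))$, establishing covariance. I expect the only genuine obstacle to be the inhomogeneous term $\mathrm{Scal}((\partial_\rho U)U^{-1})$, where one must control the noncommutativity of $u$ with $\partial_\rho u$ inside the exponential; cyclicity of the scalar part (equivalently the trace identity $\partial_\rho\ln\det U = \partial_\rho\mathrm{Tr}\,u$) is exactly the ingredient that does the work, while the centrality of $\mathbb{C}$ renders all else bookkeeping.
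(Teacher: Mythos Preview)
Your proposal is correct and follows the same overall architecture as the paper: first establish that $\mathrm{Scal}(W_\rho')=\mathrm{Scal}(W_\rho)-\partial_\rho\mathrm{Scal}(u)$ (the content of the paper's Lemma~\ref{Lemma:ScalWW}), then verify covariance of $D_\rho\beta$ by a direct computation, using throughout that $\exp(r\,\mathrm{Scal}(u))\in\mathbb{C}$ is central and associative.

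The only substantive difference lies in how you handle the inhomogeneous piece $\mathrm{Scal}\bigl((\partial_\rho U)U^{-1}\bigr)$. The paper proves this as a separate lemma (Lemma~\ref{Lemma:ScalDerU}) by expanding both exponentials as power series, applying the inner-product identities (\ref{Eq:ipMoveRL})--(\ref{Eq:ipMoveRR}) term by term, and resumming. Your route via cyclicity $\mathrm{Scal}(xy)=\mathrm{Scal}(yx)$ combined with Duhamel's integral $\partial_\rho e^{u}=\int_0^1 e^{su}(\partial_\rho u)e^{(1-s)u}\,ds$ (or, equivalently, the Jacobi determinant identity under $\mathbb{C}\otimes\mathbb{A}\cong\mathrm{M}_2(\mathbb{C})$) is a cleaner and more conceptual alternative: it isolates the single algebraic fact---trace-cyclicity---that makes the computation work, rather than burying it in a double summation. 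The paper's version has the virtue of staying entirely within the octonionic inner-product calculus already set up, while yours imports a standard Banach-algebra tool but avoids index gymnastics.
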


\begin{proof}
Due to associativity, all quantities belonging to $\mathbb{C}\otimes \mathbb{%
A}$, the calculation is straightforward:%
\begin{eqnarray*}
\left( D_{\rho }\beta \right) ^{\prime } &=&\partial _{\rho }\left[ \beta
\exp \left( r\mathrm{Scal}\left( u\right) \right) \right] +r\left[ \beta
\exp \left( r\mathrm{Scal}\left( u\right) \right) \right] \mathrm{Scal}%
\left( W_{\rho }^{\prime }\right) \\
&=&\left[ \partial _{\rho }\beta +\beta r\mathrm{Scal}\left( \partial _{\rho
}u\right) +r\beta \mathrm{Scal}\left( W_{\rho }^{\prime }\right) \right]
\exp \left( r\mathrm{Scal}\left( u\right) \right) \\
&=&\left[ \partial _{\rho }\beta +r\beta \mathrm{Scal}\left( W_{\rho
}^{\prime }+\partial _{\rho }u\right) \right] \exp \left( r\mathrm{Scal}%
\left( u\right) \right) \\
&=&\left[ \partial _{\rho }\beta +r\beta \mathrm{Scal}\left( W_{\rho
}\right) \right] \exp \left( r\mathrm{Scal}\left( u\right) \right) \\
&=&\left( D_{\rho }\beta \right) \exp \left( r\mathrm{Scal}\left( u\right)
\right) ,
\end{eqnarray*}%
using Lemma \ref{Lemma:ScalWW}, and the fact that $\exp \left( r\mathrm{Scal}%
\left( u\right) \right) \in \mathbb{C}$ and $\mathrm{Scal}\left( W_{\rho
}^{\prime }\right) \in \mathrm{i}\mathbb{R}$ commute and associate with
anything.
\end{proof}

\vspace{\baselineskip}Note that $\exp \left( r\mathrm{Scal}\left( u\right)
\right) \in \mathbb{C}$, as introduced in this Proposition, is indeed a $%
\mathrm{U}\left( 1\right) $ phase, as the following calculation shows:%
\begin{eqnarray*}
\exp \left( r\mathrm{Scal}\left( u\right) \right) ^{\ast } &=&\exp \left( r%
\mathrm{Scal}\left( u^{\ast }\right) \right) \\
&=&\exp \left( r\mathrm{Scal}\left( -\overline{u}\right) \right) \\
&=&\exp \left( r\mathrm{Scal}\left( -u\right) \right) \\
&=&\exp \left( -r\mathrm{Scal}\left( u\right) \right) \\
&=&\exp \left( r\mathrm{Scal}\left( u\right) \right) ^{-1},
\end{eqnarray*}%
using $\overline{u}^{\ast }=-u$. Thus the quantity $\left\langle \beta
^{\ast },\overline{\mathrm{e}}^{\rho }D_{\rho }\beta \right\rangle $ is 
\textit{locally} $\mathrm{U}\left( 1\right) $ invariant.

\section{Conclusion}

As $\mathrm{Scal}\left( W_{\rho }\right) \in \mathrm{i}\mathbb{R}$, because $%
W_{\rho }\in \left( \mathbb{C}\otimes \mathbb{A}\right) ^{-}\equiv \mathrm{i}%
\mathbb{R}\cup \mathrm{Vec}\left( \mathbb{A}\right) $, Propositions \ref%
{Prop:CovDerBeta} and \ref{Prop:TransGaugeBeta} say that $\beta $, unlike $%
\alpha $, can couple \textit{only} to the $\mathrm{U}\left( 1\right) $ part
of the $\mathrm{SU}\left( 2\right) \times \mathrm{U}\left( 1\right) $ gauge
connection $W_{\rho }$.

It is therefore tempting to assign to $\alpha $ left-handed $\mathrm{SU}%
\left( 2\right) $ \textit{doublets}, and to $\beta $ right-handed $\mathrm{SU%
}\left( 2\right) $ \textit{singlets}. For $\beta $, though, this is not
quite satisfactory as $\beta $ possesses four (complex) degrees of freedom
(as does $\alpha $), whereas an $\mathrm{SU}\left( 2\right) $ singlet
possesses only two (complex) degrees of freedom: it seems that some
truncation of the degrees of freedom of $\beta $ is required; note, however,
that this truncation can\textit{not} possibly be performed by some
projection operator, as in the present formalism the analogue of $\gamma
_{5}\equiv -\mathrm{i}\gamma ^{0}\gamma ^{1}\gamma ^{2}\gamma ^{3}$ in the
Dirac formalism seems trivial due to the relation $-\mathrm{ie}^{0}\overline{%
\mathrm{e}}^{1}\mathrm{e}^{2}\overline{\mathrm{e}}^{3}=1$.

Thus, the formalism here presented, crucially depending on nonassociativity,
as previously noted in Remark \ref{Remark:NonassocBetaRep}, seems capable of
describing the coupling of spinors to the electroweak force with\textit{out}
projection operators $\frac{1}{2}\left( 1\pm \gamma _{5}\right) $.

\section{Auxiliary material, I: Lorentz transformations}

\begin{lemma}
\label{Lemma:LorentzAlgebraS}The quantities $S_{\mu \nu }$, Eq. (\ref%
{Eq:SGensL}), obey the Lorentz algebra,%
\begin{equation*}
-\mathrm{i}\left[ S_{\mu \nu },S_{\rho \sigma }\right] =\eta _{\mu \rho
}S_{\nu \sigma }-\eta _{\mu \sigma }S_{\nu \rho }-\eta _{\nu \rho }S_{\mu
\sigma }+\eta _{\nu \sigma }S_{\mu \rho }.
\end{equation*}
\end{lemma}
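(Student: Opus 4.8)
The plan is to exploit the fact that every quantity here lives in $\mathbb{C}\otimes\mathbb{A}\cong\mathbb{C}\otimes\mathbb{H}$, which is associative, so that the whole statement reduces to ordinary quaternionic bookkeeping. First I would linearize the defining relation~(\ref{Eq:SGensL}): combining it with the symmetric inner-product identity $\mathrm{e}_\mu\overline{\mathrm{e}}_\nu + \mathrm{e}_\nu\overline{\mathrm{e}}_\mu = 2\langle\mathrm{e}_\mu,\mathrm{e}_\nu\rangle = 2\eta_{\mu\nu}$ and halving gives the single product rule $\mathrm{e}_\mu\overline{\mathrm{e}}_\nu = \eta_{\mu\nu} + 2\mathrm{i}S_{\mu\nu}$, which will be the workhorse.

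Next I would make the generators explicit. Feeding $\mathrm{e}_0 = \mathrm{i}$, $\overline{\mathrm{e}}_i = -\mathrm{e}_i$, and $\mathrm{e}_i\mathrm{e}_j = -\delta_{ij} + \varepsilon_{ij}{}^k\mathrm{e}_k$ into the product rule yields the boost generators $S_{0i} = -\tfrac12\mathrm{e}_i$ and the rotation generators $S_{ij} = \tfrac{\mathrm{i}}{2}\varepsilon_{ij}{}^k\mathrm{e}_k$, together with $S_{00}=0$ and $S_{\mu\nu}=-S_{\nu\mu}$. Both families lie in $\mathbb{C}\otimes\mathrm{Vec}(\mathbb{A})$, as they must, and the only algebraic input left is the $\mathfrak{su}(2)$ commutator $[\mathrm{e}_i,\mathrm{e}_j] = 2\varepsilon_{ij}{}^k\mathrm{e}_k$, immediate from the quaternion rule. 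Because $S_{\mu\nu}$ is antisymmetric and vanishes when both indices are $0$, it then suffices to verify the claimed algebra on three representative commutators, every other case following by relabeling and antisymmetry: boost--boost $[S_{0i},S_{0j}]$, boost--rotation $[S_{0i},S_{jk}]$, and rotation--rotation $[S_{ij},S_{kl}]$.

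In each case I would substitute the explicit generators, pull out the scalar coefficients, reduce to $[\mathrm{e}_i,\mathrm{e}_j]$, and compare with the right-hand side of the asserted identity after evaluating the metric factors; the sign $\eta_{00}=-1$ is precisely what produces the correct relative signs between the compact (rotation) and non-compact (boost) sectors. The first two cases are routine, a single $\varepsilon$-contraction via $\varepsilon_{lab}\varepsilon_{lcd} = \delta_{ac}\delta_{bd} - \delta_{ad}\delta_{bc}$ sufficing.

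I expect the rotation--rotation case to be the main obstacle, since it produces a double contraction $\varepsilon_{ij}{}^m\varepsilon_{kl}{}^n\varepsilon_{mn}{}^p$ and, after reduction, demands the five-index identity $\delta_{ik}\varepsilon_{jlm} - \delta_{il}\varepsilon_{jkm} - \delta_{jk}\varepsilon_{ilm} + \delta_{jl}\varepsilon_{ikm} = \varepsilon_{kli}\delta_{jm} - \varepsilon_{klj}\delta_{im}$ to match the $\eta\,S$ terms on the right; verifying this purely combinatorial relation (e.g.\ by checking the independent index assignments compatible with the antisymmetries in $(ij)$ and $(kl)$) is the one genuinely fiddly step. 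The alternative of a fully index-free computation, expanding $-4[S_{\mu\nu},S_{\rho\sigma}] = [\mathrm{e}_\mu\overline{\mathrm{e}}_\nu, \mathrm{e}_\rho\overline{\mathrm{e}}_\sigma]$ and contracting the inner pair via $\overline{\mathrm{e}}_\nu\mathrm{e}_\rho = \eta_{\nu\rho} - 2\mathrm{i}S^{*}_{\nu\rho}$, is tempting but less clean, because the conjugate generators $S^{*}_{\nu\rho}$ behave oppositely in the rotation and boost sectors and reintroduce exactly the case distinction one had hoped to avoid.
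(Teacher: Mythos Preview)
Your proposal is correct, but it takes a genuinely different route from the paper. You compute the generators explicitly as $S_{0i}=-\tfrac12\mathrm{e}_i$ and $S_{ij}=\tfrac{\mathrm{i}}{2}\varepsilon_{ij}{}^{k}\mathrm{e}_k$ and then verify the algebra case by case (boost--boost, boost--rotation, rotation--rotation), the last case requiring a five-index $\varepsilon$-identity. The paper instead never splits into cases: it runs the standard Clifford-algebra argument, repeatedly applying $\mathrm{e}_\mu\overline{\mathrm{e}}_\nu+\mathrm{e}_\nu\overline{\mathrm{e}}_\mu=2\eta_{\mu\nu}$ (and its barred counterpart) to push $\mathrm{e}_\mu\overline{\mathrm{e}}_\nu\mathrm{e}_\rho\overline{\mathrm{e}}_\sigma$ into the form $\mathrm{e}_\rho\overline{\mathrm{e}}_\sigma\mathrm{e}_\mu\overline{\mathrm{e}}_\nu$ plus $\eta$-terms, then antisymmetrizes. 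In particular it never introduces $S^{\ast}_{\nu\rho}$ as you feared---it uses the swap rule $\overline{\mathrm{e}}_\nu\mathrm{e}_\rho = 2\eta_{\nu\rho}-\overline{\mathrm{e}}_\rho\mathrm{e}_\nu$ directly rather than repackaging the antisymmetric part, so your objection to the index-free route does not actually bite. Your method has the virtue of making the $\mathfrak{su}(2)\oplus\mathfrak{su}(2)$ structure visible, at the cost of the combinatorial $\varepsilon$-identity; the paper's method is uniform and exactly parallels the familiar proof for $\sigma_{\mu\nu}=-\tfrac{\mathrm{i}}{4}[\gamma_\mu,\gamma_\nu]$, at the cost of being less explicit.
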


\begin{proof}
By a calculation completely analogous to the proof that $-\frac{\mathrm{i}}{4%
}\left[ \gamma _{\mu },\gamma _{\nu }\right] $ in the standard Dirac
formalism obey the Lorentz algebra, using here $\mathrm{e}_{\mu }\overline{%
\mathrm{e}}_{\nu }+\mathrm{e}_{\nu }\overline{\mathrm{e}}_{\mu }=\overline{%
\mathrm{e}}_{\mu }\mathrm{e}_{\nu }+\overline{\mathrm{e}}_{\nu }\mathrm{e}%
_{\mu }=2\eta _{\mu \nu }$:%
\begin{eqnarray*}
\mathrm{e}_{\mu }\overline{\mathrm{e}}_{\nu }\mathrm{e}_{\rho }\overline{%
\mathrm{e}}_{\sigma } &=&\mathrm{e}_{\mu }\left( 2\eta _{\nu \rho }-%
\overline{\mathrm{e}}_{\rho }\mathrm{e}_{\nu }\right) \overline{\mathrm{e}}%
_{\sigma } \\
&=&2\eta _{\nu \rho }\mathrm{e}_{\mu }\overline{\mathrm{e}}_{\sigma }-%
\mathrm{e}_{\mu }\overline{\mathrm{e}}_{\rho }\left( 2\eta _{\nu \sigma }-%
\mathrm{e}_{\sigma }\overline{\mathrm{e}}_{\nu }\right) \\
&=&2\eta _{\nu \rho }\mathrm{e}_{\mu }\overline{\mathrm{e}}_{\sigma }-2\eta
_{\nu \sigma }\mathrm{e}_{\mu }\overline{\mathrm{e}}_{\rho }+\left( 2\eta
_{\mu \rho }-\mathrm{e}_{\rho }\overline{\mathrm{e}}_{\mu }\right) \mathrm{e}%
_{\sigma }\overline{\mathrm{e}}_{\nu } \\
&=&2\eta _{\nu \rho }\mathrm{e}_{\mu }\overline{\mathrm{e}}_{\sigma }-2\eta
_{\nu \sigma }\mathrm{e}_{\mu }\overline{\mathrm{e}}_{\rho }+2\eta _{\mu
\rho }\mathrm{e}_{\sigma }\overline{\mathrm{e}}_{\nu }-\mathrm{e}_{\rho
}\left( 2\eta _{\mu \sigma }-\overline{\mathrm{e}}_{\sigma }\mathrm{e}_{\mu
}\right) \overline{\mathrm{e}}_{\nu }\Rightarrow \\
\left[ \mathrm{e}_{\mu }\overline{\mathrm{e}}_{\nu },\mathrm{e}_{\rho }%
\overline{\mathrm{e}}_{\sigma }\right] &=&2\eta _{\nu \rho }\mathrm{e}_{\mu }%
\overline{\mathrm{e}}_{\sigma }-2\eta _{\nu \sigma }\mathrm{e}_{\mu }%
\overline{\mathrm{e}}_{\rho }+2\eta _{\mu \rho }\mathrm{e}_{\sigma }%
\overline{\mathrm{e}}_{\nu }-2\eta _{\mu \sigma }\mathrm{e}_{\rho }\overline{%
\mathrm{e}}_{\nu },
\end{eqnarray*}%
from which it follows that%
\begin{eqnarray*}
-16\left[ S_{\mu \nu },S_{\rho \sigma }\right] &=&\left[ \mathrm{e}_{\mu }%
\overline{\mathrm{e}}_{\nu },\mathrm{e}_{\rho }\overline{\mathrm{e}}_{\sigma
}\right] -\left[ \mathrm{e}_{\nu }\overline{\mathrm{e}}_{\mu },\mathrm{e}%
_{\rho }\overline{\mathrm{e}}_{\sigma }\right] -\left[ \mathrm{e}_{\mu }%
\overline{\mathrm{e}}_{\nu },\mathrm{e}_{\sigma }\overline{\mathrm{e}}_{\rho
}\right] +\left[ \mathrm{e}_{\nu }\overline{\mathrm{e}}_{\mu },\mathrm{e}%
_{\sigma }\overline{\mathrm{e}}_{\rho }\right] \\
&=&\left( 2\eta _{\nu \rho }\mathrm{e}_{\mu }\overline{\mathrm{e}}_{\sigma
}-2\eta _{\nu \sigma }\mathrm{e}_{\mu }\overline{\mathrm{e}}_{\rho }+2\eta
_{\mu \rho }\mathrm{e}_{\sigma }\overline{\mathrm{e}}_{\nu }-2\eta _{\mu
\sigma }\mathrm{e}_{\rho }\overline{\mathrm{e}}_{\nu }\right) \\
&&-\left( 2\eta _{\mu \rho }\mathrm{e}_{\nu }\overline{\mathrm{e}}_{\sigma
}-2\eta _{\mu \sigma }\mathrm{e}_{\nu }\overline{\mathrm{e}}_{\rho }+2\eta
_{\nu \rho }\mathrm{e}_{\sigma }\overline{\mathrm{e}}_{\mu }-2\eta _{\nu
\sigma }\mathrm{e}_{\rho }\overline{\mathrm{e}}_{\mu }\right) \\
&&-\left( 2\eta _{\nu \sigma }\mathrm{e}_{\mu }\overline{\mathrm{e}}_{\rho
}-2\eta _{\nu \rho }\mathrm{e}_{\mu }\overline{\mathrm{e}}_{\sigma }+2\eta
_{\mu \sigma }\mathrm{e}_{\rho }\overline{\mathrm{e}}_{\nu }-2\eta _{\mu
\rho }\mathrm{e}_{\sigma }\overline{\mathrm{e}}_{\nu }\right) \\
&&+\left( 2\eta _{\mu \sigma }\mathrm{e}_{\nu }\overline{\mathrm{e}}_{\rho
}-2\eta _{\mu \rho }\mathrm{e}_{\nu }\overline{\mathrm{e}}_{\sigma }+2\eta
_{\nu \sigma }\mathrm{e}_{\rho }\overline{\mathrm{e}}_{\mu }-2\eta _{\nu
\rho }\mathrm{e}_{\sigma }\overline{\mathrm{e}}_{\mu }\right) \\
&=&4\eta _{\nu \rho }\left( \mathrm{e}_{\mu }\overline{\mathrm{e}}_{\sigma }-%
\mathrm{e}_{\sigma }\overline{\mathrm{e}}_{\mu }\right) -4\eta _{\nu \sigma
}\left( \mathrm{e}_{\mu }\overline{\mathrm{e}}_{\rho }-\mathrm{e}_{\rho }%
\overline{\mathrm{e}}_{\mu }\right) \\
&&+4\eta _{\mu \rho }\left( \mathrm{e}_{\sigma }\overline{\mathrm{e}}_{\nu }-%
\mathrm{e}_{\nu }\overline{\mathrm{e}}_{\sigma }\right) -4\eta _{\mu \sigma
}\left( \mathrm{e}_{\rho }\overline{\mathrm{e}}_{\nu }-\mathrm{e}_{\nu }%
\overline{\mathrm{e}}_{\rho }\right) ,
\end{eqnarray*}%
from which the result readily follows.
\end{proof}

\begin{lemma}
\label{Lemma:DoubleCover}The following identity holds:%
\begin{equation*}
S_{\mu \nu }^{\ast }\overline{\mathrm{e}}^{\rho }+\overline{\mathrm{e}}%
^{\rho }S_{\mu \nu }=\left( V_{\mu \nu }\right) ^{\rho }{}_{\sigma }%
\overline{\mathrm{e}}^{\sigma }.
\end{equation*}
\end{lemma}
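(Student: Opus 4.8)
The plan is to substitute the defining relations (\ref{Eq:SGensL}) and (\ref{Eq:SGensR}) for $S_{\mu \nu}$ and $S_{\mu \nu}^{\ast}$ directly into the left-hand side, and then to collapse the resulting products of three basis units into single units using the Clifford-type relations already exploited in the proof of Lemma \ref{Lemma:LorentzAlgebraS}. The decisive observation, which I would state at the very outset, is that every factor appearing here lies in the associative subalgebra $\mathbb{C}\otimes \mathbb{A}\cong \mathbb{C}\otimes \mathbb{H}$, so all products associate freely; there is no genuine octonionic nonassociativity to confront in this particular identity, and the computation is formally identical to the familiar gamma-matrix manipulation $\tfrac{1}{2}[\gamma^{\rho},\sigma_{\mu\nu}]$ of the Dirac formalism.

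Concretely, writing $4\mathrm{i}S_{\mu\nu}=\mathrm{e}_{\mu}\overline{\mathrm{e}}_{\nu}-\mathrm{e}_{\nu}\overline{\mathrm{e}}_{\mu}$ and $-4\mathrm{i}S_{\mu\nu}^{\ast}=\overline{\mathrm{e}}_{\mu}\mathrm{e}_{\nu}-\overline{\mathrm{e}}_{\nu}\mathrm{e}_{\mu}$, I would reduce the claim to showing that $4\mathrm{i}$ times the left-hand side equals
\[
\overline{\mathrm{e}}^{\rho}\left( \mathrm{e}_{\mu}\overline{\mathrm{e}}_{\nu}-\mathrm{e}_{\nu}\overline{\mathrm{e}}_{\mu}\right) -\left( \overline{\mathrm{e}}_{\mu}\mathrm{e}_{\nu}-\overline{\mathrm{e}}_{\nu}\mathrm{e}_{\mu}\right) \overline{\mathrm{e}}^{\rho}.
\]
The engine of the calculation is the index-raised anticommutation relation $\overline{\mathrm{e}}^{\rho}\mathrm{e}_{\mu}+\overline{\mathrm{e}}_{\mu}\mathrm{e}^{\rho}=2\delta_{\mu}^{\rho}$ together with its mirror $\mathrm{e}^{\rho}\overline{\mathrm{e}}_{\nu}+\mathrm{e}_{\nu}\overline{\mathrm{e}}^{\rho}=2\delta_{\nu}^{\rho}$. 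Applying these in turn to transport $\overline{\mathrm{e}}^{\rho}$ from the front of the first triple product to the back, I obtain
\[
\overline{\mathrm{e}}^{\rho}\mathrm{e}_{\mu}\overline{\mathrm{e}}_{\nu}=2\delta_{\mu}^{\rho}\overline{\mathrm{e}}_{\nu}-2\delta_{\nu}^{\rho}\overline{\mathrm{e}}_{\mu}+\overline{\mathrm{e}}_{\mu}\mathrm{e}_{\nu}\overline{\mathrm{e}}^{\rho},
\]
and its $\mu\leftrightarrow\nu$ counterpart. Subtracting these, the cubic remainder left over from the first group is exactly $\left( \overline{\mathrm{e}}_{\mu}\mathrm{e}_{\nu}-\overline{\mathrm{e}}_{\nu}\mathrm{e}_{\mu}\right) \overline{\mathrm{e}}^{\rho}$, which cancels precisely against the second group in the display above, leaving only $4\delta_{\mu}^{\rho}\overline{\mathrm{e}}_{\nu}-4\delta_{\nu}^{\rho}\overline{\mathrm{e}}_{\mu}$. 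Dividing by $4\mathrm{i}$ and writing $\overline{\mathrm{e}}_{\mu}=\eta_{\mu\sigma}\overline{\mathrm{e}}^{\sigma}$ then yields $-\mathrm{i}\left( \delta_{\mu}^{\rho}\eta_{\nu\sigma}-\delta_{\nu}^{\rho}\eta_{\mu\sigma}\right) \overline{\mathrm{e}}^{\sigma}$, which is $\left( V_{\mu\nu}\right) ^{\rho}{}_{\sigma}\overline{\mathrm{e}}^{\sigma}$ by the definition of $V_{\mu\nu}$.

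The one step meriting care — and the point I would double-check — is the bookkeeping of the cubic terms: one must confirm that the leftover triple products from the $S_{\mu\nu}^{\ast}\overline{\mathrm{e}}^{\rho}$ piece and the $\overline{\mathrm{e}}^{\rho}S_{\mu\nu}$ piece cancel outright rather than merely simplify, since it is exactly this cancellation that reduces the identity to its linear Kronecker-delta terms. That the cancellation occurs is a direct consequence of $S_{\mu\nu}^{\ast}$ standing on the left while $S_{\mu\nu}$ stands on the right, mirroring the $\overline{\Lambda}_{S}^{\ast}(\cdot)\Lambda_{S}$ sandwich of (\ref{Eq:DoubleCoverFinite}). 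I would also flag that this lemma is the infinitesimal form of (\ref{Eq:DoubleCoverFinite}), which is obtained by exponentiating it; the argument must therefore proceed directly from (\ref{Eq:SGensL})--(\ref{Eq:SGensR}) and the Clifford relations, and must not appeal to the finite relation, on pain of circularity.
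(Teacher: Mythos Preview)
Your proposal is correct and follows essentially the same route as the paper's proof: both substitute the defining expressions (\ref{Eq:SGensL})--(\ref{Eq:SGensR}), use associativity within $\mathbb{C}\otimes\mathbb{A}$, and reduce the triple products via the Clifford-type relation $\overline{\mathrm{e}}^{\rho}\mathrm{e}_{\mu}+\overline{\mathrm{e}}_{\mu}\mathrm{e}^{\rho}=2\delta_{\mu}^{\rho}$ so that the cubic remainders cancel and only the Kronecker-delta terms survive. The only cosmetic difference is that the paper commutes $\overline{\mathrm{e}}^{\rho}$ leftward through $\overline{\mathrm{e}}_{\mu}\mathrm{e}_{\nu}\overline{\mathrm{e}}^{\rho}$ whereas you commute it rightward through $\overline{\mathrm{e}}^{\rho}\mathrm{e}_{\mu}\overline{\mathrm{e}}_{\nu}$; these are mirror images of the same manipulation.
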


\begin{proof}
By direct calculation, using $\mathrm{e}_{\mu }\overline{\mathrm{e}}_{\nu }+%
\mathrm{e}_{\nu }\overline{\mathrm{e}}_{\mu }=\overline{\mathrm{e}}_{\mu }%
\mathrm{e}_{\nu }+\overline{\mathrm{e}}_{\nu }\mathrm{e}_{\mu }=2\eta _{\mu
\nu }$:%
\begin{eqnarray*}
\overline{\mathrm{e}}_{\mu }\mathrm{e}_{\nu }\overline{\mathrm{e}}^{\rho }
&=&\overline{\mathrm{e}}_{\mu }\left( 2\delta _{\nu }^{\rho }-\mathrm{e}%
^{\rho }\overline{\mathrm{e}}_{\nu }\right) \\
&=&2\delta _{\nu }^{\rho }\overline{\mathrm{e}}_{\mu }-\overline{\mathrm{e}}%
_{\mu }\mathrm{e}^{\rho }\overline{\mathrm{e}}_{\nu } \\
&=&2\delta _{\nu }^{\rho }\overline{\mathrm{e}}_{\mu }-\left( 2\delta _{\mu
}^{\rho }-\overline{\mathrm{e}}^{\rho }\mathrm{e}_{\mu }\right) \overline{%
\mathrm{e}}_{\nu } \\
&=&2\delta _{\nu }^{\rho }\overline{\mathrm{e}}_{\mu }-2\delta _{\mu }^{\rho
}\overline{\mathrm{e}}_{\nu }+\overline{\mathrm{e}}^{\rho }\mathrm{e}_{\mu }%
\overline{\mathrm{e}}_{\nu },
\end{eqnarray*}%
from which it follows that%
\begin{eqnarray*}
-4\mathrm{i}S_{\mu \nu }^{\ast }\overline{\mathrm{e}}^{\rho } &=&\left(
2\delta _{\nu }^{\rho }\overline{\mathrm{e}}_{\mu }-2\delta _{\mu }^{\rho }%
\overline{\mathrm{e}}_{\nu }+\overline{\mathrm{e}}^{\rho }\mathrm{e}_{\mu }%
\overline{\mathrm{e}}_{\nu }\right) -\left( 2\delta _{\mu }^{\rho }\overline{%
\mathrm{e}}_{\nu }-2\delta _{\nu }^{\rho }\overline{\mathrm{e}}_{\mu }+%
\overline{\mathrm{e}}^{\rho }\mathrm{e}_{\nu }\overline{\mathrm{e}}_{\mu
}\right) \\
&=&4\delta _{\nu }^{\rho }\overline{\mathrm{e}}_{\mu }-4\delta _{\mu }^{\rho
}\overline{\mathrm{e}}_{\nu }+\overline{\mathrm{e}}^{\rho }\left( \mathrm{e}%
_{\mu }\overline{\mathrm{e}}_{\nu }-\mathrm{e}_{\nu }\overline{\mathrm{e}}%
_{\mu }\right) \\
&=&4\delta _{\nu }^{\rho }\overline{\mathrm{e}}_{\mu }-4\delta _{\mu }^{\rho
}\overline{\mathrm{e}}_{\nu }+4\mathrm{i}\overline{\mathrm{e}}^{\rho }S_{\mu
\nu },
\end{eqnarray*}%
from which it follows that%
\begin{eqnarray*}
S_{\mu \nu }^{\ast }\overline{\mathrm{e}}^{\rho }+\overline{\mathrm{e}}%
^{\rho }S_{\mu \nu } &=&-\mathrm{i}\left( \delta _{\mu }^{\rho }\overline{%
\mathrm{e}}_{\nu }-\delta _{\nu }^{\rho }\overline{\mathrm{e}}_{\mu }\right)
\\
&=&-\mathrm{i}\left( \delta _{\mu }^{\rho }\eta _{\nu \sigma }-\delta _{\nu
}^{\rho }\eta _{\mu \sigma }\right) \overline{\mathrm{e}}^{\sigma } \\
&=&\left( V_{\mu \nu }\right) ^{\rho }{}_{\sigma }\overline{\mathrm{e}}%
^{\sigma }.
\end{eqnarray*}
\end{proof}

\section{Auxiliary material, II: Gauge transformations}

\begin{lemma}
\label{Lemma:ScalDerU}Let $U\equiv \exp \left( u\right) $, where $u\in 
\mathbb{C}\otimes \mathbb{A}$. Then,%
\begin{equation*}
\left\langle 1,\left( \partial _{\rho }U\right) U^{-1}\right\rangle
=\left\langle 1,\partial _{\rho }u\right\rangle .
\end{equation*}
\end{lemma}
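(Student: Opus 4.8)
The plan is to reduce the statement to an identity about scalar parts and then exploit the fact that the octonionic norm of $U$ is a central complex scalar. First I would record that, since $\overline{1}=1$, the defining relation $2\langle x,y\rangle\equiv x\overline{y}+y\overline{x}$ gives $\langle 1,z\rangle=\mathrm{Scal}(z)$ for every $z$. The claim is thus equivalent to $\mathrm{Scal}\bigl((\partial_\rho U)U^{-1}\bigr)=\mathrm{Scal}(\partial_\rho u)$, and throughout I may work purely associatively, all quantities lying in $\mathbb{C}\otimes\mathbb{A}$.

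The key structural input is the norm $N(U)\equiv U\overline{U}$. Writing $u=\mathrm{Scal}(u)+\mathrm{Vec}(u)$, the scalar part lies in $\mathbb{C}$ and is central, so $u$ and $\overline{u}=\mathrm{Scal}(u)-\mathrm{Vec}(u)$ commute; since octonionic conjugation is a $\mathbb{C}$-linear anti-automorphism with $\overline{\exp(u)}=\exp(\overline{u})$ (powers of a single element being insensitive to the order reversal), this yields $N(U)=\exp(u)\exp(\overline{u})=\exp(u+\overline{u})=\exp\bigl(2\,\mathrm{Scal}(u)\bigr)\in\mathbb{C}$. Hence $U^{-1}=N(U)^{-1}\overline{U}$, and because $N(U)$ is a central complex scalar it factors out of the scalar part:
\[
\mathrm{Scal}\bigl((\partial_\rho U)U^{-1}\bigr)=N(U)^{-1}\,\mathrm{Scal}\bigl((\partial_\rho U)\overline{U}\bigr).
\]

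Next I would rewrite $\mathrm{Scal}\bigl((\partial_\rho U)\overline{U}\bigr)$ as the inner product $\langle \partial_\rho U,U\rangle$, using the immediate consequence $\langle x,y\rangle=\mathrm{Scal}(x\overline{y})$ of the definition. By the Leibniz rule and the symmetry $\langle x,y\rangle=\langle y,x\rangle$, one has $\langle \partial_\rho U,U\rangle=\tfrac12\,\partial_\rho\langle U,U\rangle=\tfrac12\,\partial_\rho N(U)$, since $\langle U,U\rangle=\mathrm{Scal}(N(U))=N(U)$. Assembling the pieces gives $\mathrm{Scal}\bigl((\partial_\rho U)U^{-1}\bigr)=\tfrac12 N(U)^{-1}\partial_\rho N(U)=\tfrac12\,\partial_\rho\log N(U)=\mathrm{Scal}(\partial_\rho u)$, using $\log N(U)=2\,\mathrm{Scal}(u)$ and $\partial_\rho\mathrm{Scal}(u)=\mathrm{Scal}(\partial_\rho u)$.

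I expect the main obstacle to be establishing cleanly, and within the formalism, that $N(U)$ is a central complex scalar equal to $\exp\bigl(2\,\mathrm{Scal}(u)\bigr)$ — in particular verifying $[u,\overline{u}]=0$ and $\overline{\exp(u)}=\exp(\overline{u})$. A more pedestrian alternative avoids the norm altogether: expand $U$ as a power series, apply the Leibniz rule to $\partial_\rho(u^n)$, and use cyclicity of the scalar part, $\mathrm{Scal}(xy)=\mathrm{Scal}(yx)$ (itself a consequence of the symmetry of $\langle\cdot,\cdot\rangle$ together with $\mathrm{Scal}(\overline{z})=\mathrm{Scal}(z)$), along with the commutativity of $u$ with $U^{-1}$, to collapse the resulting double sum to $\mathrm{Scal}\bigl((\partial_\rho u)\,UU^{-1}\bigr)=\mathrm{Scal}(\partial_\rho u)$.
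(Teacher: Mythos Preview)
Your argument is correct, and your main route via the norm $N(U)\equiv U\overline{U}=\exp\bigl(2\,\mathrm{Scal}(u)\bigr)$ is genuinely different from the paper's. The paper proceeds by brute-force double power-series expansion of $(\partial_\rho U)U^{-1}$, applies the Leibniz rule to $\partial_\rho u^{m}$, and then uses the inner-product move identities $\langle z,xy\rangle=\langle \overline{x}z,y\rangle$ and $\langle z,xy\rangle=\langle z\overline{y},x\rangle$ to collapse the resulting triple sum down to $\langle 1,\partial_\rho u\rangle$ --- essentially the ``pedestrian alternative'' you sketch in your final paragraph. Your norm argument is shorter and more conceptual: it isolates $\mathrm{Scal}(u)$ as half the logarithm of a central complex quantity, whose logarithmic derivative is immediate. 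The trade-off is that the paper's calculation stays entirely within the inner-product identities already established, whereas yours calls on the auxiliary facts $[u,\overline{u}]=0$ and $\overline{\exp(u)}=\exp(\overline{u})$; these are easy in $\mathbb{C}\otimes\mathbb{A}$ (the first because $\mathrm{Scal}(u)$ is central, the second because conjugation reverses products of a single element trivially), but they are not recorded elsewhere in the paper, so you would need to justify them briefly. One cosmetic point: you can sidestep the logarithm by computing $N(U)^{-1}\partial_\rho N(U)=\exp(-2\,\mathrm{Scal}(u))\cdot 2\,\partial_\rho\mathrm{Scal}(u)\cdot\exp(2\,\mathrm{Scal}(u))=2\,\partial_\rho\mathrm{Scal}(u)$ directly.
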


\begin{proof}
By direct calculation:%
\begin{eqnarray*}
\left\langle 1,\left( \partial _{\rho }U\right) U^{-1}\right\rangle
&=&\sum_{m=0}^{\infty }\sum_{n=0}^{\infty }\frac{1}{m!}\frac{\left(
-1\right) ^{n}}{n!}\left\langle 1,\left( \partial _{\rho }u^{m}\right)
u^{n}\right\rangle \\
&=&\sum_{m=1}^{\infty }\sum_{n=0}^{\infty }\sum_{l=0}^{m-1}\frac{1}{m!}\frac{%
\left( -1\right) ^{n}}{n!}\left\langle 1,\left[ u^{l}\left( \partial _{\rho
}u\right) u^{m-l-1}\right] u^{n}\right\rangle \\
&=&\sum_{m=1}^{\infty }\sum_{n=0}^{\infty }\sum_{l=0}^{m-1}\frac{1}{m!}\frac{%
\left( -1\right) ^{n}}{n!}\left\langle 1,u^{l}\left( \partial _{\rho
}u\right) u^{m-l-1+n}\right\rangle \\
&=&\sum_{m=1}^{\infty }\sum_{n=0}^{\infty }\sum_{l=0}^{m-1}\frac{1}{m!}\frac{%
\left( -1\right) ^{n}}{n!}\left\langle \overline{u}^{l}\overline{u}%
^{m-l-1+n},\partial _{\rho }u\right\rangle \\
&=&\sum_{m=1}^{\infty }\sum_{n=0}^{\infty }\frac{m}{m!}\frac{\left(
-1\right) ^{n}}{n!}\left\langle \overline{u}^{m-1}\overline{u}^{n},\partial
_{\rho }u\right\rangle \\
&=&\sum_{m=0}^{\infty }\sum_{n=0}^{\infty }\frac{1}{m!}\frac{\left(
-1\right) ^{n}}{n!}\left\langle \overline{u}^{m}\overline{u}^{n},\partial
_{\rho }u\right\rangle \\
&=&\left\langle \overline{U}\left( \overline{U}\right) ^{-1},\partial _{\rho
}u\right\rangle \\
&=&\left\langle 1,\partial _{\rho }u\right\rangle ,
\end{eqnarray*}%
using Eqs. (\ref{Eq:ipMoveRL}) and (\ref{Eq:ipMoveRR}), and associativity
(all quantities belonging to $\mathbb{C}\otimes \mathbb{A}$).
\end{proof}

\begin{lemma}
\label{Lemma:ScalWW}Let $W_{\rho }^{\prime }$ be given by Eq. (\ref%
{Eq:TransLorentzW}). Then,%
\begin{equation*}
\mathrm{Scal}\left( W_{\rho }^{\prime }-W_{\rho }\right) =-\mathrm{Scal}%
\left( \partial _{\rho }u\right) .
\end{equation*}
\end{lemma}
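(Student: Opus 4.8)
The plan is to reduce everything to the identity $\mathrm{Scal}(x) = \langle 1, x\rangle$, which follows from the definition of the inner product by setting $y = 1$ and using $\overline{1} = 1$. Since octonionic conjugation fixes the complex factor, $\mathrm{Scal}$ is $\mathbb{C}$-linear, so I would first substitute Eq. (\ref{Eq:TransLorentzW}) and split
\[
\mathrm{Scal}(W_{\rho}' - W_{\rho}) = \mathrm{Scal}(U W_{\rho} U^{-1}) - \mathrm{Scal}\big((\partial_{\rho} U) U^{-1}\big) - \mathrm{Scal}(W_{\rho}).
\]
This reduces the claim to two independent facts: that conjugation by $U$ leaves the scalar part unchanged, and that the inhomogeneous term contributes exactly $\mathrm{Scal}(\partial_{\rho} u)$.

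For the first fact I would establish the cyclicity $\mathrm{Scal}(ab) = \mathrm{Scal}(ba)$ for all $a, b \in \mathbb{C} \otimes \mathbb{A}$. Writing $\mathrm{Scal}(ab) = \langle 1, ab\rangle$ and moving the leading factor of the right slot into the left slot (Eq. (\ref{Eq:ipMoveRL})) gives $\langle \overline{a}, b\rangle$; using the stated symmetries $\langle x, y\rangle = \langle y, x\rangle$ and $\langle x, y\rangle = \langle \overline{x}, \overline{y}\rangle$, this equals $\langle a, \overline{b}\rangle = \langle \overline{b}, a\rangle = \mathrm{Scal}(ba)$. Applying cyclicity with $a = U W_{\rho}$ and $b = U^{-1}$, and then using associativity (all quantities lie in $\mathbb{C} \otimes \mathbb{A}$), yields $\mathrm{Scal}(U W_{\rho} U^{-1}) = \mathrm{Scal}(U^{-1} U W_{\rho}) = \mathrm{Scal}(W_{\rho})$, so the homogeneous term cancels the lone $-\mathrm{Scal}(W_{\rho})$. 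For the second fact I would invoke Lemma \ref{Lemma:ScalDerU} directly, so that $\mathrm{Scal}\big((\partial_{\rho} U) U^{-1}\big) = \langle 1, (\partial_{\rho} U) U^{-1}\rangle = \langle 1, \partial_{\rho} u\rangle = \mathrm{Scal}(\partial_{\rho} u)$.

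Combining the two facts gives $\mathrm{Scal}(W_{\rho}' - W_{\rho}) = \mathrm{Scal}(W_{\rho}) - \mathrm{Scal}(\partial_{\rho} u) - \mathrm{Scal}(W_{\rho}) = -\mathrm{Scal}(\partial_{\rho} u)$, as required. I do not expect a genuine obstacle here, since the whole computation takes place inside the associative subalgebra $\mathbb{C} \otimes \mathbb{A}$ and the only real content is already packaged in Lemma \ref{Lemma:ScalDerU}, whose summation argument absorbs the non-commutativity of $U$ and $\partial_{\rho} U$. The one point deserving care is the cyclicity step: because $\mathbb{C} \otimes \mathbb{A} \cong \mathbb{C} \otimes \mathbb{H}$ is non-commutative, $\mathrm{Scal}(ab) = \mathrm{Scal}(ba)$ must be justified through the conjugation symmetry of the inner product rather than taken for granted, and the factors must be kept in the order permitted by associativity.
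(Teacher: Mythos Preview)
Your proposal is correct and follows essentially the same route as the paper: both rewrite $\mathrm{Scal}(x)=\langle 1,x\rangle$, use the inner-product move identities to show $\mathrm{Scal}(UW_{\rho}U^{-1})=\mathrm{Scal}(W_{\rho})$, and invoke Lemma~\ref{Lemma:ScalDerU} for the inhomogeneous term. The only cosmetic difference is that you first isolate a general cyclicity statement $\mathrm{Scal}(ab)=\mathrm{Scal}(ba)$, whereas the paper moves $U$ and $U^{-1}$ directly into the left slot via Eqs.~(\ref{Eq:ipMoveRL})--(\ref{Eq:ipMoveRR}) to obtain $\langle \overline{U}\,\overline{U}^{-1},W_{\rho}\rangle=\langle 1,W_{\rho}\rangle$.
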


\begin{proof}
By direct calculation:%
\begin{eqnarray*}
\mathrm{Scal}\left( W_{\rho }^{\prime }\right) &=&\left\langle 1,UW_{\rho
}U^{-1}-\left( \partial _{\rho }U\right) U^{-1}\right\rangle \\
&=&\left\langle 1,UW_{\rho }U^{-1}\right\rangle -\left\langle 1,\left(
\partial _{\rho }U\right) U^{-1}\right\rangle \\
&=&\left\langle \overline{U}\left( \overline{U}\right) ^{-1},W_{\rho
}\right\rangle -\left\langle 1,\left( \partial _{\rho }U\right)
U^{-1}\right\rangle \\
&=&\left\langle 1,W_{\rho }\right\rangle -\left\langle 1,\partial _{\rho
}u\right\rangle \\
&=&\left\langle 1,W_{\rho }-\partial _{\rho }u\right\rangle \\
&=&\mathrm{Scal}\left( W_{\rho }-\partial _{\rho }u\right) ,
\end{eqnarray*}%
using Eqs. (\ref{Eq:ipMoveRL}) and (\ref{Eq:ipMoveRR}), and Lemma \ref%
{Lemma:ScalDerU}.
\end{proof}

\section{Auxiliary material, III: Octonionic identities}

The complexified octonions belong to a class of algebras called composition
algebras.

\subsection{General identities}

\begin{lemma}[Ref. \protect\cite{Springer and Veldkamp}]
The following identities hold for any composition algebra:%
\begin{eqnarray}
\left\langle xy,z\right\rangle  &=&\left\langle y,\overline{x}z\right\rangle
,  \label{Eq:ipMoveLL} \\
\left\langle xy,z\right\rangle  &=&\left\langle x,z\overline{y}\right\rangle
,  \label{Eq:ipMoveLR} \\
\left\langle z,xy\right\rangle  &=&\left\langle \overline{x}z,y\right\rangle
,  \label{Eq:ipMoveRL} \\
\left\langle z,xy\right\rangle  &=&\left\langle z\overline{y},x\right\rangle
.  \label{Eq:ipMoveRR}
\end{eqnarray}%
Even though these four expressions are pairwise related by way of the
identity $\left\langle x,y\right\rangle \equiv \left\langle y,x\right\rangle 
$, they have all been listed for presentational clarity.
\end{lemma}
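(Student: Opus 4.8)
The plan is to reduce all four identities to a single one and to build that single one from the composition property, i.e.\ the multiplicativity of the norm, which is the defining feature of the algebra. The two symmetries recorded immediately after the statement, $\langle x,y\rangle=\langle y,x\rangle$ and $\langle x,y\rangle=\langle\overline{x},\overline{y}\rangle$, together with the elementary facts $\overline{xy}=\overline{y}\,\overline{x}$ and $\overline{\overline{x}}=x$, let me pass between the four versions for free; so the real content lies in establishing just one, say $\langle xy,z\rangle=\langle y,\overline{x}z\rangle$.

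For that base identity I would start from the norm $N(x)\equiv x\overline{x}=\langle x,x\rangle\in\mathbb{C}$ and the composition law $N(xy)=N(x)N(y)$. Polarizing this in the second slot (replace $y$ by $y+w$ and cancel the pure terms using multiplicativity) gives $\langle xy,xw\rangle=\langle x,x\rangle\langle y,w\rangle$; polarizing the result once more in the first slot (replace $x$ by $x+u$) yields the fully bilinearized composition law
\[
\langle xy,uw\rangle+\langle uy,xw\rangle=2\langle x,u\rangle\langle y,w\rangle .
\]
All of this is purely $\mathbb{C}$-bilinear bookkeeping and needs neither associativity nor alternativity, only that the scalar-valued norm is multiplicative.

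The key specialization is $u=1$. Using $2\langle x,1\rangle=x\overline{1}+1\overline{x}=x+\overline{x}\in\mathbb{C}$, which is a central scalar and may therefore be pulled inside the form, the right-hand side becomes $\langle(x+\overline{x})y,w\rangle=\langle xy,w\rangle+\langle\overline{x}y,w\rangle$. The term $\langle xy,w\rangle$ cancels against the $u=1$ instance of the first summand on the left, leaving $\langle y,xw\rangle=\langle\overline{x}y,w\rangle$; renaming variables and applying symmetry of the form turns this into $\langle xy,z\rangle=\langle y,\overline{x}z\rangle$, the first identity (\ref{Eq:ipMoveLL}). The third identity, $\langle z,xy\rangle=\langle\overline{x}z,y\rangle$, is then just its reflection under $\langle a,b\rangle=\langle b,a\rangle$. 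For $\langle xy,z\rangle=\langle x,z\overline{y}\rangle$ I would conjugate both entries, $\langle xy,z\rangle=\langle\overline{y}\,\overline{x},\overline{z}\rangle$, apply the already-proven first identity with leading factor $\overline{y}$, and conjugate back; the last identity follows from this again by symmetry.

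I expect the only delicate point to be the centrality argument in the $u=1$ step: one must be sure that $x+\overline{x}$ genuinely lies in $\mathbb{C}\cdot 1$, so that it commutes and associates with everything and can legitimately be moved through the (non-associative) product and into the bilinear form. This is exactly where the composition-algebra structure, namely the scalar trace $x+\overline{x}$ and the scalar norm $x\overline{x}$, does the work that associativity would otherwise do, and it is the single place where more than formal multilinear algebra is invoked.
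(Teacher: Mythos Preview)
The paper does not supply a proof of this lemma at all; it merely cites Springer and Veldkamp and moves on. Your polarization argument is correct and is, in fact, essentially the proof one finds in that reference: linearize the composition law $N(xy)=N(x)N(y)$ twice to obtain $\langle xy,uw\rangle+\langle uy,xw\rangle=2\langle x,u\rangle\langle y,w\rangle$, specialize $u=1$, and use that the trace $x+\overline{x}=2\langle x,1\rangle$ lies in the ground field. Your identification of the one genuinely non-formal step---centrality of $x+\overline{x}$---is exactly right, and in a unital composition algebra this is immediate from the standard involution $\overline{x}=2\langle x,1\rangle\cdot 1-x$. The reductions of the remaining three identities to the first via $\langle a,b\rangle=\langle b,a\rangle$, $\langle a,b\rangle=\langle\overline{a},\overline{b}\rangle$, and $\overline{xy}=\overline{y}\,\overline{x}$ are routine and carried out correctly.
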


\begin{lemma}[{\protect\cite[Lemma 1.2]{Zvengrowski}}]
For any $x,y,z\in \mathbb{C}\otimes \mathbb{O}$, the following identity
holds:%
\begin{equation}
x\left( \overline{y}z\right) +y\left( \overline{x}z\right) =2\left\langle
x,y\right\rangle z.  \label{Eq:Zvengrowski}
\end{equation}
\end{lemma}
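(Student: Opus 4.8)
The plan is to obtain the identity by polarizing a simpler quadratic relation. First I would establish, for all $a,z\in\mathbb{C}\otimes\mathbb{O}$, the single-variable identity
\[
a\left(\overline{a}z\right)=\left\langle a,a\right\rangle z .
\]
To prove this, note that $a+\overline{a}=2\mathrm{Scal}\left(a\right)\in\mathbb{C}$ is a central scalar (it commutes and associates with everything), so that $\overline{a}=2\mathrm{Scal}\left(a\right)-a$ and hence
\[
a\left(\overline{a}z\right)=2\mathrm{Scal}\left(a\right)\left(az\right)-a\left(az\right).
\]
Since $\mathbb{C}\otimes\mathbb{O}$ is a composition algebra it is alternative, so by Artin's theorem the two elements $a,z$ generate an associative subalgebra and the left-alternative law $a\left(az\right)=\left(aa\right)z$ holds. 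Substituting this gives $a\left(\overline{a}z\right)=\left(2\mathrm{Scal}\left(a\right)a-aa\right)z=\left(a\overline{a}\right)z$, and $a\overline{a}=\left\langle a,a\right\rangle\in\mathbb{C}$ follows directly from setting $x=y=a$ in the defining relation $2\left\langle x,y\right\rangle\equiv x\overline{y}+y\overline{x}$. This proves the quadratic identity.

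Second, I would linearize in $a$ by the substitution $a\to x+y$. Expanding the left-hand side using the (complex-)linearity of both multiplication and conjugation yields
\[
\left(x+y\right)\left(\left(\overline{x}+\overline{y}\right)z\right)=x\left(\overline{x}z\right)+x\left(\overline{y}z\right)+y\left(\overline{x}z\right)+y\left(\overline{y}z\right),
\]
while the right-hand side expands, by bilinearity and the symmetry $\left\langle x,y\right\rangle\equiv\left\langle y,x\right\rangle$ recorded in the Notation section, to $\left(\left\langle x,x\right\rangle+2\left\langle x,y\right\rangle+\left\langle y,y\right\rangle\right)z$. Cancelling the two diagonal contributions $x\left(\overline{x}z\right)=\left\langle x,x\right\rangle z$ and $y\left(\overline{y}z\right)=\left\langle y,y\right\rangle z$, each an instance of the quadratic identity of the first step, leaves precisely $x\left(\overline{y}z\right)+y\left(\overline{x}z\right)=2\left\langle x,y\right\rangle z$, as claimed.

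The only substantive point is the quadratic identity of the first step, and the crux there is the appeal to alternativity: everything reduces to the left-alternative law $a\left(az\right)=\left(aa\right)z$, which is exactly the property that fails for three genuinely independent octonions but survives whenever two of the three factors in the associator coincide. The polarization of the second step is then purely formal, requiring only linearity together with the symmetry of the inner product, so I do not anticipate any further obstacle once the quadratic relation is in hand.
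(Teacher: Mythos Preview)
Your proof is correct. The paper does not supply its own proof of this lemma; it simply cites it from Zvengrowski, so there is no argument to compare against directly. Your route---establishing the quadratic identity $a(\overline{a}z)=\langle a,a\rangle z$ via the left-alternative law and then polarizing $a\to x+y$---is in fact the standard derivation of this identity in the composition-algebra literature, and is essentially how Zvengrowski obtains it as well. The one step worth flagging for completeness is your appeal to alternativity of $\mathbb{C}\otimes\mathbb{O}$: this is valid because alternativity is a polynomial identity (the associator is alternating), hence is preserved under extension of scalars from $\mathbb{R}$ to $\mathbb{C}$; but since the paper does not state this explicitly, you might add a sentence to that effect.
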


\subsection{$A-B$ identities\label{SubSec:ABIdentities}}

The Cayley-Dickson construction \cite[Sec. 2.2]{Baez}, makes it clear that $%
\mathbb{O}$ is a $\mathbb{Z}_{2}$-grading over $\mathbb{H}$ (just as $%
\mathbb{H}$ is a $\mathbb{Z}_{2}$-grading over $\mathbb{C}$ which, in turn,
is a $\mathbb{Z}_{2}$-grading over $\mathbb{R}$).

\begin{lemma}[The $A-B$ Lemma]
\label{Lemma:ABLemma}Let $\mathbb{A}\cong \mathbb{H}$ and $\mathbb{B}\equiv 
\mathbb{O}\setminus \mathbb{A}$ in conjunction be a $\mathbb{Z}_{2}$-grading
of $\mathbb{O}$. Then,%
\begin{eqnarray*}
\mathbb{A}\cdot \mathbb{A} &=&\mathbb{A}, \\
\mathbb{A}\cdot \mathbb{B} &=&\mathbb{B}, \\
\mathbb{B}\cdot \mathbb{A} &=&\mathbb{B}, \\
\mathbb{B}\cdot \mathbb{B} &=&\mathbb{A},
\end{eqnarray*}%
where $\mathbb{X}\cdot \mathbb{Y}\equiv \left\{ xy\left\vert x\in \mathbb{X}%
,y\in \mathbb{Y}\right. \right\} $. By complexification, the very same holds
true, of course, for $\mathbb{C}\otimes \mathbb{A}\cong \mathbb{C}\otimes 
\mathbb{H}$ and $\mathbb{C}\otimes \mathbb{B}\equiv \left( \mathbb{C}\otimes 
\mathbb{O}\right) \setminus \left( \mathbb{C}\otimes \mathbb{A}\right) $.
\end{lemma}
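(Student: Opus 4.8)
The plan is to make the $\mathbb{Z}_2$-grading explicit through the Cayley--Dickson doubling and then read the four products directly off the multiplication rule. I would write $\mathbb{O} = \mathbb{A} \oplus \mathbb{A}\ell$, where $\ell$ is the imaginary unit adjoined in doubling $\mathbb{A} \cong \mathbb{H}$ up to $\mathbb{O}$, identifying $\mathbb{A}$ with the degree-zero summand and $\mathbb{B} = \mathbb{A}\ell$ with the degree-one summand. Taking the doubling product in the form $(a + b\ell)(c + d\ell) = (ac - d\bar b) + (\bar a d + cb)\ell$, the products of homogeneous elements become $ac$, $\;a(d\ell) = (\bar a d)\ell$, $\;(b\ell)c = (cb)\ell$, and $(b\ell)(d\ell) = -d\bar b$, with $a,b,c,d \in \mathbb{H}$.

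From these four expressions the inclusions $\mathbb{A}\cdot\mathbb{A} \subseteq \mathbb{A}$, $\mathbb{A}\cdot\mathbb{B}\subseteq\mathbb{B}$, $\mathbb{B}\cdot\mathbb{A}\subseteq\mathbb{B}$, and $\mathbb{B}\cdot\mathbb{B}\subseteq\mathbb{A}$ are immediate, since $ac$ and $-d\bar b$ lie in $\mathbb{H}\cong\mathbb{A}$ while the two mixed products visibly carry a factor of $\ell$. To upgrade each inclusion to the claimed equality I would exhibit preimages. For the first three this is handled by the identity $1\in\mathbb{A}$: one has $a = 1\cdot a$, $\;b\ell = 1\cdot(b\ell)$, and $\;b\ell = (b\ell)\cdot 1$, so that $\mathbb{A}$, $\mathbb{B}$, $\mathbb{B}$ are contained in $\mathbb{A}\cdot\mathbb{A}$, $\mathbb{A}\cdot\mathbb{B}$, $\mathbb{B}\cdot\mathbb{A}$ respectively.

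The only case requiring a separate idea is $\mathbb{A}\subseteq\mathbb{B}\cdot\mathbb{B}$, where the identity trick fails because $1\notin\mathbb{B}$; this is the step I expect to merit the most care. Here I would invoke the invertibility of $\ell$, which satisfies $\ell^2 = -1$: for arbitrary $a\in\mathbb{A}$ the product rule gives $\ell\bigl((-a)\ell\bigr) = -(-a)\bar 1 = a$ with both factors in $\mathbb{B}$, so $\mathbb{A}\subseteq\mathbb{B}\cdot\mathbb{B}$ and all four equalities hold. The complexified statement then follows without further work: since $\mathbb{C}$ commutes with $\mathbb{O}$ by assumption, tensoring with $\mathbb{C}$ leaves both the direct-sum decomposition and the bilinear multiplication intact, so the same four identities hold verbatim for $\mathbb{C}\otimes\mathbb{A}$ and $\mathbb{C}\otimes\mathbb{B}$. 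Beyond fixing a single internally consistent Cayley--Dickson sign convention, I anticipate no genuine difficulty.
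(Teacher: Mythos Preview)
The paper offers no proof of this lemma: it is stated immediately after the remark that the Cayley--Dickson construction exhibits $\mathbb{O}$ as a $\mathbb{Z}_2$-grading over $\mathbb{H}$ (with a citation to Baez), and is then left without further argument. Your explicit unpacking of the doubling product, together with the surjectivity check via $1$ and $\ell$, is precisely the computation the paper is taking for granted, so your approach and the paper's implicit one coincide.
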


\begin{lemma}
For any $a\in \mathbb{C}\otimes \mathbb{A}$ and any $b\in \mathbb{C}\otimes 
\mathbb{B}$, the following identity holds:%
\begin{equation}
ab=b\overline{a}.  \label{Eq:abIdentity}
\end{equation}
\end{lemma}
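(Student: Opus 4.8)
The plan is to avoid invoking the explicit Cayley--Dickson doubling formulas and instead to extract the identity from the inner-product definition $2\langle x,y\rangle\equiv\overline{x}y+\overline{y}x$ together with the $A$--$B$ Lemma (Lemma \ref{Lemma:ABLemma}). By $\mathbb{C}$-bilinearity of multiplication and of octonionic conjugation it suffices to treat generic $a\in\mathbb{C}\otimes\mathbb{A}$ and $b\in\mathbb{C}\otimes\mathbb{B}$, the case $b=0$ being trivial. The first step is to show that $\mathbb{C}\otimes\mathbb{B}\subseteq\mathrm{Vec}\left(\mathbb{C}\otimes\mathbb{O}\right)$, i.e.\ that $\overline{b}=-b$. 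For this I would use the quadratic relation $b^{2}=2\mathrm{Scal}\left(b\right)b-\langle b,b\rangle$, valid in any composition algebra and immediate here from $\overline{b}=2\mathrm{Scal}\left(b\right)-b$ and $b\overline{b}=\langle b,b\rangle\in\mathbb{C}$. Since $b^{2}\in\left(\mathbb{C}\otimes\mathbb{B}\right)\cdot\left(\mathbb{C}\otimes\mathbb{B}\right)=\mathbb{C}\otimes\mathbb{A}$ by the $A$--$B$ Lemma, while $\langle b,b\rangle\in\mathbb{C}\subseteq\mathbb{C}\otimes\mathbb{A}$, the relation forces $2\mathrm{Scal}\left(b\right)b\in\mathbb{C}\otimes\mathbb{A}$; as the grading $\mathbb{C}\otimes\mathbb{O}=\left(\mathbb{C}\otimes\mathbb{A}\right)\oplus\left(\mathbb{C}\otimes\mathbb{B}\right)$ is direct and $b\notin\mathbb{C}\otimes\mathbb{A}$ for $b\neq0$, this is possible only if $\mathrm{Scal}\left(b\right)=0$, whence $\overline{b}=-b$.

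Next I would establish the orthogonality $\langle a,b\rangle=0$. Octonionic conjugation preserves $\mathbb{C}\otimes\mathbb{A}$, since $\overline{a}=2\mathrm{Scal}\left(a\right)-a$ with $\mathrm{Scal}\left(a\right)\in\mathbb{C}\subseteq\mathbb{C}\otimes\mathbb{A}$; and by the previous step $\overline{b}=-b\in\mathbb{C}\otimes\mathbb{B}$. Hence both $\overline{a}b$ and $\overline{b}a$ lie in $\mathbb{C}\otimes\mathbb{B}$ by the $A$--$B$ Lemma, so that $2\langle a,b\rangle=\overline{a}b+\overline{b}a\in\mathbb{C}\otimes\mathbb{B}$. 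But $\langle a,b\rangle\in\mathbb{C}\subseteq\mathbb{C}\otimes\mathbb{A}$, and directness of the grading again forces $\langle a,b\rangle=0$.

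Combining the two facts concludes the proof without further computation: from $2\langle a,b\rangle=\overline{a}b+\overline{b}a=0$ and $\overline{b}=-b$ one gets $\overline{a}b=ba$ for every $a\in\mathbb{C}\otimes\mathbb{A}$. Replacing $a$ by $\overline{a}$ (again in $\mathbb{C}\otimes\mathbb{A}$) and using $\overline{\overline{a}}=a$ yields the claimed $ab=b\overline{a}$. I expect the only delicate point to be the bookkeeping of the grading in the first two steps---specifically the facts that the scalars $\mathbb{C}$ sit inside $\mathbb{C}\otimes\mathbb{A}$, that conjugation respects the grading, and that the sum $\left(\mathbb{C}\otimes\mathbb{A}\right)\oplus\left(\mathbb{C}\otimes\mathbb{B}\right)$ is direct, so that $\mathbb{C}\otimes\mathbb{A}$ and $\mathbb{C}\otimes\mathbb{B}$ meet only in $0$. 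As a fully elementary alternative, one may instead write $\mathbb{B}=\mathbb{A}\ell$ for the Cayley--Dickson doubling unit $\ell$ and verify $ab=b\overline{a}$ by direct substitution into the doubling product and conjugation formulas.
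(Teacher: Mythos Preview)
Your argument is correct and follows essentially the same route as the paper: both proofs reduce the identity to the vanishing of the inner product $\langle a,b\rangle$ together with $\overline{b}=-b$, and then read off $ab=b\overline{a}$ from the defining relation for $\langle\cdot,\cdot\rangle$. The only difference is that the paper takes $\langle a,b\rangle=0$ and $\overline{b}=-b$ as immediate consequences of the $\mathbb{Z}_2$-grading, whereas you derive them explicitly from the $A$--$B$ Lemma; your final step also uses the form $2\langle a,b\rangle=\overline{a}b+\overline{b}a$ and a substitution $a\mapsto\overline{a}$, while the paper uses $2\langle a,b\rangle=a\overline{b}+b\overline{a}$ to get $ab=b\overline{a}$ in one stroke.
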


\begin{proof}
Follows immediately from $0=2\left\langle a,b\right\rangle =a\overline{b}+b%
\overline{a}=-ab+b\overline{a}$.
\end{proof}

\begin{lemma}
For any $a,a^{\prime }\in \mathbb{C}\otimes \mathbb{A}$ and any $b\in 
\mathbb{C}\otimes \mathbb{B}$, the following identities hold (note the
reversal of the order of $a$ and $a^{\prime }$):%
\begin{eqnarray}
\left( aa^{\prime }\right) b &=&a^{\prime }\left( ab\right) ,
\label{Eq:aabIdentity} \\
b\left( a^{\prime }a\right) &=&\left( ba\right) a^{\prime }.
\label{Eq:baaIdentity}
\end{eqnarray}
\end{lemma}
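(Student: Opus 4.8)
The plan is to prove the second identity, Eq.~(\ref{Eq:baaIdentity}), and to deduce the first, Eq.~(\ref{Eq:aabIdentity}), from it. That the two are equivalent follows at once from Eq.~(\ref{Eq:abIdentity}) together with the standard anti-automorphism property $\overline{aa'}=\overline{a'}\,\overline{a}$ of octonionic conjugation: writing $(aa')b=b\,\overline{aa'}=b(\overline{a'}\,\overline{a})$ for the left-hand side of Eq.~(\ref{Eq:aabIdentity}), and $a'(ab)=(ab)\overline{a'}=(b\overline{a})\overline{a'}$ for its right-hand side (the intermediate $ab\in\mathbb{C}\otimes\mathbb{B}$ by Lemma~\ref{Lemma:ABLemma}), Eq.~(\ref{Eq:aabIdentity}) becomes precisely Eq.~(\ref{Eq:baaIdentity}) with $a\mapsto\overline{a}$ and $a'\mapsto\overline{a'}$. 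Since conjugation is a bijection of $\mathbb{C}\otimes\mathbb{A}$, it therefore suffices to establish $b(a'a)=(ba)a'$.

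The key step is an auxiliary relation extracted from Zvengrowski's identity, Eq.~(\ref{Eq:Zvengrowski}). Setting $x=a$, $y=b$, $z=a'$ there, and using that $\langle a,b\rangle=0$ (the orthogonality of $\mathbb{C}\otimes\mathbb{A}$ and $\mathbb{C}\otimes\mathbb{B}$ already invoked in the proof of Eq.~(\ref{Eq:abIdentity})) together with $\overline{b}=-b$ (valid since $\mathrm{Scal}(b)=\langle b,1\rangle=0$), I obtain the relation
\[
a(ba')=b(\overline{a}\,a'),
\]
for all $a,a'\in\mathbb{C}\otimes\mathbb{A}$ and $b\in\mathbb{C}\otimes\mathbb{B}$; call it $(\star)$. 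With $(\star)$ in hand the target is two lines: since $ba\in\mathbb{C}\otimes\mathbb{B}$ by Lemma~\ref{Lemma:ABLemma}, Eq.~(\ref{Eq:abIdentity}) in the rearranged form $\beta\alpha=\overline{\alpha}\beta$ gives $(ba)a'=\overline{a'}(ba)$, and then $(\star)$, applied with $a\mapsto\overline{a'}$, yields $\overline{a'}(ba)=b(a'a)$. Hence $(ba)a'=b(a'a)$, as desired.

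The main obstacle is not the computation but locating the right extra input. The naive route --- rewriting both identities via Eq.~(\ref{Eq:abIdentity}) and appealing to the alternating (totally antisymmetric) associator of the alternative algebra $\mathbb{C}\otimes\mathbb{O}$ --- is circular: antisymmetry alone makes Eqs.~(\ref{Eq:aabIdentity}) and~(\ref{Eq:baaIdentity}) equivalent but never pins down their common value, and Zvengrowski's identity applied with both free arguments taken in $\mathbb{C}\otimes\mathbb{A}$ only reproduces the symmetric combination $a(a'b)+a'(ab)$. The decisive move is to feed one \emph{odd} argument $b\in\mathbb{C}\otimes\mathbb{B}$ into Zvengrowski's identity so that the inner product on its right-hand side vanishes by orthogonality; this is exactly what produces the asymmetric relation $(\star)$ that breaks the tie.
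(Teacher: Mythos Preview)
Your proof is correct and follows essentially the same approach as the paper: both note that the two identities are conjugate to one another, then deduce one of them by feeding Zvengrowski's identity with one argument from $\mathbb{C}\otimes\mathbb{B}$ (so that $\langle a,b\rangle=0$ kills the right-hand side) and finishing with Eq.~(\ref{Eq:abIdentity}). The only cosmetic difference is that the paper targets Eq.~(\ref{Eq:aabIdentity}) directly via the substitution $(x,y,z)=(a',b,\overline{a})$, whereas you go through the auxiliary relation $(\star)$ and an extra use of Eq.~(\ref{Eq:abIdentity}) to reach Eq.~(\ref{Eq:baaIdentity}); the content is the same.
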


\begin{proof}
The two identities are related by octonionic conjugation and subsequent
renaming $\left( \overline{a},\overline{a}^{\prime },\overline{b}\right)
\rightarrow \left( a,a^{\prime },b\right) $; therefore only the first one
will be proved. By taking $\left( x,y,z\right) =\left( a^{\prime },b,%
\overline{a}\right) $ in Eq. (\ref{Eq:Zvengrowski}), and using $\overline{b}%
=-b$ and Eq. (\ref{Eq:abIdentity}), it follows that%
\begin{equation*}
0=2\left\langle a^{\prime },b\right\rangle \overline{a}=a^{\prime }\left( 
\overline{b}\overline{a}\right) +b\left( \overline{a}^{\prime }\overline{a}%
\right) =-a^{\prime }\left( ab\right) +\left( aa^{\prime }\right) b.
\end{equation*}
\end{proof}

\begin{lemma}
For any $a\in \mathbb{C}\otimes \mathbb{A}$ and any $b,b^{\prime }\in 
\mathbb{C}\otimes \mathbb{B}$, the following identities hold (note the
non-reversal of the order of $b$ and $b^{\prime }$):%
\begin{eqnarray}
\left( bb^{\prime }\right) a &=&\left( ab\right) b^{\prime },
\label{Eq:bbaIdentity} \\
a\left( b^{\prime }b\right) &=&b^{\prime }\left( ba\right) .
\label{Eq:abbIdentity}
\end{eqnarray}
\end{lemma}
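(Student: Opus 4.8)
The plan is to exploit the same conjugation symmetry used in the preceding lemma: applying octonionic conjugation to $\left( bb^{\prime }\right) a=\left( ab\right) b^{\prime }$ and then renaming $\left( \overline{a},\overline{b},\overline{b}^{\prime }\right) \rightarrow \left( a,b,b^{\prime }\right) $ turns Eq.~(\ref{Eq:bbaIdentity}) into Eq.~(\ref{Eq:abbIdentity}), since conjugation reverses the order of every product. Hence it suffices to prove a single one of the two identities. Because the Zvengrowski identity Eq.~(\ref{Eq:Zvengrowski}) naturally produces \emph{right}-associated triple products $x\left( \overline{y}z\right) $, I would establish the right-associated identity Eq.~(\ref{Eq:abbIdentity}), $a\left( b^{\prime }b\right) =b^{\prime }\left( ba\right) $, directly, and then recover Eq.~(\ref{Eq:bbaIdentity}) by conjugation.

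To prove $a\left( b^{\prime }b\right) =b^{\prime }\left( ba\right) $, substitute $\left( x,y,z\right) =\left( a,b^{\prime },b\right) $ into Eq.~(\ref{Eq:Zvengrowski}), giving $a\left( \overline{b}^{\prime }b\right) +b^{\prime }\left( \overline{a}b\right) =2\left\langle a,b^{\prime }\right\rangle b$. Three facts then collapse this to the claim. First, $\overline{b}^{\prime }=-b^{\prime }$ because $b^{\prime }\in \mathbb{C}\otimes \mathbb{B}$ is purely vectorial, so the first term becomes $-a\left( b^{\prime }b\right) $. Second, $\left\langle a,b^{\prime }\right\rangle =0$ since $\mathbb{C}\otimes \mathbb{A}$ and $\mathbb{C}\otimes \mathbb{B}$ are orthogonal (this is exactly the vanishing already used in the proof of Eq.~(\ref{Eq:abIdentity})), so the right-hand side is zero. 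Third, Eq.~(\ref{Eq:abIdentity}) applied to the $\mathbb{C}\otimes \mathbb{A}$-element $\overline{a}$ gives $\overline{a}\,b=b\,\overline{\overline{a}}=ba$, so the second term becomes $b^{\prime }\left( ba\right) $. The identity then reads $-a\left( b^{\prime }b\right) +b^{\prime }\left( ba\right) =0$, which is Eq.~(\ref{Eq:abbIdentity}). Conjugating and renaming as above then yields Eq.~(\ref{Eq:bbaIdentity}).

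I expect no serious obstacle here; the proof is a one-line application of Eq.~(\ref{Eq:Zvengrowski}) once the substitution is chosen. The only points demanding care are (i) selecting $\left( x,y,z\right) =\left( a,b^{\prime },b\right) $ so that the surviving inner product pairs a $\mathbb{C}\otimes \mathbb{A}$- with a $\mathbb{C}\otimes \mathbb{B}$-element and therefore vanishes, and (ii) the sign bookkeeping from $\overline{b}^{\prime }=-b^{\prime }$ together with the conjugate-flip $\overline{a}\,b=ba$. It is worth double-checking the grading consistency via Lemma~\ref{Lemma:ABLemma}: $b^{\prime }b\in \mathbb{C}\otimes \mathbb{A}$ and $ba\in \mathbb{C}\otimes \mathbb{B}$, so both $a\left( b^{\prime }b\right) $ and $b^{\prime }\left( ba\right) $ land in $\mathbb{C}\otimes \mathbb{A}$, as a genuine identity requires.
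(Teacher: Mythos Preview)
Your proof is correct and essentially identical to the paper's: both invoke the conjugation symmetry to reduce to one identity, then apply the Zvengrowski identity with the $\mathbb{C}\otimes\mathbb{A}$-element in the first slot so that the inner product $\langle a,b\rangle$ (or $\langle a,b'\rangle$) vanishes, and finish with $\overline{b}=-b$ and Eq.~(\ref{Eq:abIdentity}). The only cosmetic difference is that the paper substitutes $(x,y,z)=(a,b,b')$ while you use $(a,b',b)$, which amounts to a relabeling $b\leftrightarrow b'$; in fact the paper's displayed computation, despite its preamble, also lands on the second identity after that relabeling.
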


\begin{proof}
The two identities are related by octonionic conjugation and subsequent
renaming $\left( \overline{a},\overline{b},\overline{b}^{\prime }\right)
\rightarrow \left( a,b,b^{\prime }\right) $; therefore only the first one
will be proved. By taking $\left( x,y,z\right) =\left( a,b,b^{\prime
}\right) $ in Eq. (\ref{Eq:Zvengrowski}), and using $\overline{b}=-b$ and
Eq. (\ref{Eq:abIdentity}), it follows that%
\begin{equation*}
0=2\left\langle a,b\right\rangle b^{\prime }=a\left( \overline{b}b^{\prime
}\right) +b\left( \overline{a}b^{\prime }\right) =-a\left( bb^{\prime
}\right) +b\left( b^{\prime }a\right) .
\end{equation*}
\end{proof}

\begin{lemma}
For any $a,a^{\prime }\in \mathbb{C}\otimes \mathbb{A}$ and any $b,b^{\prime
}\in \mathbb{C}\otimes \mathbb{B}$, the following identity holds (note the
reversal of the order of $a$ and $a^{\prime }$):%
\begin{equation}
\left( ab\right) \left( b^{\prime }a^{\prime }\right) =a^{\prime }\left(
bb^{\prime }\right) a.  \label{Eq:abbaIdentity}
\end{equation}
\end{lemma}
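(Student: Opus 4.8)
The plan is to reduce the four-fold product on the left to a product of elements of $\mathbb{C}\otimes\mathbb{A}$, where associativity may be freely invoked, by successively peeling off the two $\mathbb{A}$-factors $a$ and $a'$ with the $A$--$B$ identities already established. As a preliminary sanity check, note that both sides lie in $\mathbb{C}\otimes\mathbb{A}$: by the $A$--$B$ Lemma (Lemma~\ref{Lemma:ABLemma}) one has $ab\in\mathbb{C}\otimes\mathbb{B}$ and $b'a'\in\mathbb{C}\otimes\mathbb{B}$, so their product lies in $(\mathbb{C}\otimes\mathbb{B})\cdot(\mathbb{C}\otimes\mathbb{B})=\mathbb{C}\otimes\mathbb{A}$, while on the right $bb'\in\mathbb{C}\otimes\mathbb{A}$ makes $a'(bb')a$ a product of three associating factors. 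The identity is thus a Moufang-type relation, with the reversal $a\leftrightarrow a'$ characteristic of the $\mathbb{Z}_2$-grading.

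First I would rewrite the second factor using Eq.~(\ref{Eq:abIdentity}) in the form $b'a'=\overline{a'}\,b'$ (which follows from $2\langle a',b'\rangle=0$ together with $\overline{b'}=-b'$), turning the left-hand side into $(ab)(\overline{a'}\,b')$. This exhibits the grouping $\mathbb{B}\cdot(\mathbb{A}\cdot\mathbb{B})$, which is precisely the one pattern \emph{not} covered verbatim by Eqs.~(\ref{Eq:aabIdentity})--(\ref{Eq:abbIdentity}); supplying a rearrangement for it is the one non-routine step. I would establish the auxiliary identity $c(\alpha b')=\overline{\alpha}\,(cb')$, valid for $c,b'\in\mathbb{C}\otimes\mathbb{B}$ and $\alpha\in\mathbb{C}\otimes\mathbb{A}$, by first applying Eq.~(\ref{Eq:abIdentity}) to write $\alpha b'=b'\overline{\alpha}$ and then applying Eq.~(\ref{Eq:abbIdentity}) to the resulting $\mathbb{B}\cdot(\mathbb{B}\cdot\mathbb{A})$ expression $c(b'\overline{\alpha})$.

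Applying this auxiliary identity with $c=ab$ and $\alpha=\overline{a'}$ pulls $a'$ to the front: $(ab)(\overline{a'}\,b')=a'\big((ab)b'\big)$. It then remains to simplify the inner $(ab)b'$, which is handled by Eq.~(\ref{Eq:bbaIdentity}) in the form $(ab)b'=(bb')a$. Substituting gives $(ab)(b'a')=a'\big((bb')a\big)$, and since $bb'$, $a$, and $a'$ all lie in the associative algebra $\mathbb{C}\otimes\mathbb{A}$, the bracketing is immaterial and the right-hand side $a'(bb')a$ is recovered, completing the argument.

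I expect the main obstacle to be exactly the $\mathbb{B}\cdot(\mathbb{A}\cdot\mathbb{B})$ rearrangement above: none of the four listed $A$--$B$ identities applies directly to this grouping, so it must be synthesised from Eq.~(\ref{Eq:abIdentity}) together with Eq.~(\ref{Eq:abbIdentity}). One could alternatively attempt to derive the statement from the linearised middle Moufang identity, but that route leaves residual terms such as $(a'b)(b'a)$ and therefore requires a second, independent relation to eliminate them; the peeling approach above sidesteps this and reduces everything to a single associative collapse within $\mathbb{C}\otimes\mathbb{A}$.
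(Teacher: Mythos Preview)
Your proof is correct and uses the same ingredients as the paper---Eqs.~(\ref{Eq:bbaIdentity}) and (\ref{Eq:abbIdentity}) followed by associativity in $\mathbb{C}\otimes\mathbb{A}$---but it takes a small detour that the paper avoids. You first unpack $b'a'=\overline{a'}\,b'$ via Eq.~(\ref{Eq:abIdentity}), which creates the $\mathbb{B}\cdot(\mathbb{A}\cdot\mathbb{B})$ pattern you then have to resolve with your auxiliary identity. The paper instead leaves $b'a'$ intact as a single element of $\mathbb{C}\otimes\mathbb{B}$ and applies Eq.~(\ref{Eq:bbaIdentity}) directly with the substitution $b'\mapsto b'a'$, giving $(ab)(b'a')=[\,b(b'a')\,]a$ in one step; Eq.~(\ref{Eq:abbIdentity}) then yields $b(b'a')=a'(bb')$, and associativity finishes. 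So the ``obstacle'' you identify is self-inflicted: keeping the composite $\mathbb{B}$-factor whole means no new rearrangement rule is needed. Your route has the mild virtue of making the role of Eq.~(\ref{Eq:abIdentity}) explicit, but the paper's is two lines shorter.
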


\begin{proof}
By direct calculation using Eqs. (\ref{Eq:bbaIdentity})-(\ref{Eq:abbIdentity}%
):%
\begin{eqnarray*}
\left( ab\right) \left( b^{\prime }a^{\prime }\right) &=&\left[ b\left(
b^{\prime }a^{\prime }\right) \right] a \\
&=&\left[ a^{\prime }\left( bb^{\prime }\right) \right] a \\
&=&a^{\prime }\left( bb^{\prime }\right) a,
\end{eqnarray*}%
where the last equality holds because $bb^{\prime }\in \mathbb{C}\otimes 
\mathbb{A}$, due to Lemma \ref{Lemma:ABLemma}, so that the relevant
quantities associate, $\left[ a,a^{\prime },bb^{\prime }\right] =0$.
\end{proof}

\end{document}